\documentclass[a4paper,UKenglish,cleveref, autoref, thm-restate]{oasics-v2021}



\bibliographystyle{plainurl}

\title{Visualization of Event Graphs for Train Schedules}
\titlerunning{Visualization of Event Graphs for Train Schedules}

  \author{Johann Hartleb}{DB InfraGO AG, Germany}{johann.hartleb@deutschebahn.com}{https://orcid.org/0000-0001-8101-1542}{}
  \author{Marie Schmidt}{Universität Würzburg, Germany \and \url{https://www.informatik.uni-wuerzburg.de/en/algo/team/schmidt-marie/}}{marie.schmidt@uni-wuerzburg.de}{https://orcid.org/0000-0001-9563-9955}{}
  \author{Samuel Wolf}{Universität Würzburg, Germany \and \url{https://www.informatik.uni-wuerzburg.de/en/algo/team/wolf-samuel/}}{samuel.wolf@uni-wuerzburg.de}{https://orcid.org/0009-0009-7098-6147}{}
  \author{Alexander~Wolff}{Universität Würzburg, Germany \and \url{https://www.informatik.uni-wuerzburg.de/en/algo/team/wolff-alexander/}}{}{https://orcid.org/0000-0001-5872-718X}{}
  \authorrunning{Johann Hartleb, Marie Schmidt, Samuel Wolf, and Alexander Wolff}

\Copyright{Johann Hartleb, Marie Schmidt, Samuel Wolf, and Alexander Wolff} 

\ccsdesc[500]{Theory of computation~Graph algorithms analysis}
\ccsdesc[500]{Theory of computation~Fixed parameter tractability}
\ccsdesc[500]{Theory of computation~Computational geometry}

\keywords{Graph Drawing, Event Graphs, Integer Linear Programming, 
Parameterized Algorithms, Treewidth}






\nolinenumbers 

\EventEditors{Jonas Sauer and Marie Schmidt}
\EventNoEds{2}
\EventLongTitle{25th Symposium on Algorithmic Approaches for Transportation Modelling, Optimization, and Systems (ATMOS 2025)}
\EventShortTitle{ATMOS 2025}
\EventAcronym{ATMOS}
\EventYear{2025}
\EventDate{September 18--19, 2025}
\EventLocation{Warsaw, Poland}
\EventLogo{}
\SeriesVolume{137}
\ArticleNo{14}
\hideOASIcs

\usepackage[color=blue!30, textsize=scriptsize]{todonotes}
\usepackage{optidef}
\usepackage{cite}
\usepackage[linesnumbered,algoruled,longend,vlined]{algorithm2e}
\usepackage[disableredefinitions]{complexity}

\DontPrintSemicolon
\SetArgSty{}
\SetKw{KwOr}{or}
\SetKw{KwAnd}{and}
\SetKw{KwNot}{not}
\SetKw{KwTrue}{true}
\SetKw{KwFalse}{false}
\SetKwInput{Input}{Input}
\SetKwInput{Output}{Output}
\setlength{\algomargin}{3ex}
\SetFuncArgSty{textit}

\SetCommentSty{mycommfont}

\newcommand{\locationOf}[1]{\ensuremath{\ell(#1)}}

\newcommand{\trainOf}[1]{\ensuremath{\operatorname{train}(#1)}}
\newcommand{\timeOf}[1]{\ensuremath{t(#1)}}

\newcommand{\Polytime}{\cP}
\newclass{\paraNP}{para\textsf{-}NP}

\newcommand{\refilp}[1]{\ensuremath{\mathsf{ILP\ref{#1}}}}
\newtheorem{problem}[theorem]{Problem}

\theoremstyle{definition}
\newtheorem{reductionrule}{Reduction Rule}
\crefname{reductionrule}{Reduction Rule}{Reduction Rules}
\Crefname{reductionrule}{Reduction Rule}{Reduction Rules}
\crefname{problem}{Problem}{Problems}

\definecolor{defblue}{rgb}{0.121,0.47,0.705}
\let\emph\relax
\DeclareTextFontCommand{\emph}{\color{defblue}\em}

\usepackage{apptools}
\newcommand{\restateref}[1]{\IfAppendix{\hyperref[#1]{$\star$}}{\hyperref[#1*]{$\star$}}}

\graphicspath{{figures/}}

\begin{document}

  \maketitle
  \begin{abstract}
    Train timetables can be represented as \emph{event graphs}, where
    \emph{events} correspond to a train passing through a location at a 
    certain point in time.
    A visual representation of an event graph is important for many
    applications such as dispatching and (the development of) dispatching
    software.
    A common way to represent event graphs are \emph{time-space diagrams}.
    In such a diagram, key locations are visualized on the y-axis and time 
    on the x-axis of a coordinate system. 
    A train's movement is then represented as a connected sequence of line 
    segments in this coordinate system. 
    This visualization allows for an easy detection of infrastructure conflicts
    and safety distance violations.
    However, time-space diagrams are usually used only to depict event graphs
    that are restricted to corridors, where an obvious ordering of the locations 
    exists. 

    In this paper, we consider the visualization of general event
    graphs in time-space diagrams, where the challenge is to find an
    ordering of the locations that produces readable drawings.  We
    argue that this means to minimize the number of \emph{turns},
    i.e., the total number of changes in y-direction.    
    To this end, we establish a connection between this problem and
    \textsc{Maximum Betweenness}.
    Then we develop a preprocessing strategy to reduce the instance
    size. We also propose a parameterized algorithm and integer linear
    programming formulations. We experimentally evaluate the preprocessing 
    strategy and the integer programming formulations on a real-world 
    dataset. Our best algorithm solves every instance in the dataset
    in less than a second.  This suggests that turn-optimal
    time-space diagrams can be computed in real time.
  \end{abstract}

  \section{Introduction}
  Train schedules are subject to constant changes due to interferences
  such as temporary infrastructure malfunctions or congestion
  resulting from
  high traffic volume. As a consequence, train schedules
  must be adjusted in real-time to remedy the disturbances via rerouting
  and other means. In recent years, the computer-assisted execution 
  of this process has gained track. 
  DB InfraGO AG, a subsidiary of Deutsche Bahn AG, is
  developing approaches based on so-called \emph{event graphs}~\cite{event-graphs}
  as an underlying structure that encodes the
  necessary information to (re-)compute a train schedule.
  An event graph models trains running on specific routes on an 
  infrastructure via events.

  \begin{definition}[Event Graph]
    An \emph{event graph} $\mathcal{E}$ is a directed graph.
    Let $V(\mathcal{E})$ denote the vertex set of~$\mathcal{E}$.
    Each vertex~$v$
    of~$\mathcal{E}$, called \emph{event}, is associated with a
    location $\locationOf{v}$, a positive integer $\trainOf{v}$, and a
    point of time $\timeOf{v}$ when the event is scheduled.  For two
    different events~$u$ and $w$, if $\timeOf{u}=\timeOf{w}$, then
    $\trainOf{u} \ne \trainOf{w}$ and
    $\locationOf{u} \ne \locationOf{w}$.  There is an arc $(u,w)$
    in~$\mathcal{E}$ if (i)~$\trainOf{u}=\trainOf{w}$,
    (ii)~$\timeOf{u} < \timeOf{w}$, and (iii)~there is no event $v$ with
    $\trainOf{v}=\trainOf{u}$ and
    $\timeOf{u} < \timeOf{v} < \timeOf{w}$.
  \end{definition}
  For a train~$z$, we call the sequence $v_1,\dots,v_j$ of all events
  with $\trainOf{v_1} = \dots =\trainOf{v_j} = z$ ordered by
  $t(\cdot)$ the \emph{train line} of train $z$.

  For the further automation and for real-time human intervention with
  timetables, it is important that large event graphs can be easily understood
  by humans.

  If the event graph corresponds to trains running on a corridor, i.e., 
  trains running from point $a$ to point $b$ in a linear piece of 
  infrastructure,
  \emph{time-space diagrams} are a common way to represent the event graph.  
  A time-space diagram can be described as a straight-line drawing of the
  event graph with the additional constraint that all vertices that belong
  to the same location lie on the same horizontal line and that the 
  x-coordinate of each vertex is given by its point in time.

  In this paper, we investigate the possibility to use time-space diagrams
  for visualizations for general event graphs, i.e., event graphs that are 
  generally not based on a linear piece of infrastructure.
  We are not aware of any previous work on the visualization of general
  event graphs.
  Formally, a time-space diagram can be defined as follows.

  \begin{definition}[Time-Space Diagram]\label{def:time-space-diagram}
    Let $\mathcal{E}$ be an event graph, let
    $Y=|\locationOf{V(\mathcal{E})}|$, and let
    $y\colon \locationOf{V(\mathcal{E})} \to \{1, 2, \dots, Y\}$ be a
    bijection.  The \emph{time-space diagram} induced by $y$ is the
    straight-line drawing of $\mathcal{E}$ in the plane where event
    $v$ is mapped to the point $(\timeOf{v},y(\locationOf{v}))$.
  \end{definition}

  In a time-space diagram (see \cref{fig:diagram-example}a for two
  examples), we call $y(p)$ the \emph{level} of location~$p$.
  We also refer to a time-space diagram of an event graph as a drawing
  of the event graph.
  \begin{figure}
    \centering
    \includegraphics{good-vs-bad-arrangement.pdf}
    \caption{(a) Two different time-space diagrams of the same event
      graph~$\mathcal{E}$ with locations $\{1, \dots, 6\}$.
      (b)~The location graph of~$\mathcal{E}$; the colored paths are
      the train lines, the gray numbers are the weights.}
    \label{fig:diagram-example}
  \end{figure}
  If an event graph is based on a corridor, the corridor induces a
  natural order of the locations: consecutive locations are assigned
  consecutive levels. In this way, train lines in the time-space diagram
  correspond to polylines that go only up- or downwards.
  However, on general event graphs where the trains do not run on a
  linear piece of infrastructure, this intuitive assignment is far from
  trivial, and it is not immediately clear which criteria yield a
  comprehensible drawing.
  \cref{fig:crossings-vs-turns} shows two possible time-space diagrams
  for the same event graph, one that minimizes the number of crossings
  of line segments 
  (a classical objective in graph drawing) and one that minimizes the 
  number of \emph{turns} which we define as follows.

  \begin{figure}
   \centering
    \includegraphics[scale=0.75]{80_2-0-crossings-1-turn.pdf}
    \caption{Two time-space diagrams of the same event graph.
    Left: A crossing-minimal drawing with zero crossings (and 71 turns).
    Right: A turn-minimal drawing with one turn (and five crossings).}
   \label{fig:crossings-vs-turns}
  \end{figure}

  Given a drawing $\Gamma$ of an event graph $\mathcal{E}$ and three 
  consecutive events of a train line in $\mathcal{E}$ with pairwise 
  distinct locations $p$, $q$, $r$, we say that there
  is a \emph{turn} in $\Gamma$ if the level of $q$ is smaller/larger than 
  the levels of $p$ and $r$.\footnote{Note that this definition does not
  consider the case where consecutive events have the same location.
  It is easy to see, however, that we can normalize any event graph such that 
  consecutive events always have different locations without changing
  the number of turns in an optimal solution.}

  \cref{fig:crossings-vs-turns} and further experiments suggest that
  minimizing the number of turns leads to time-space diagrams that are 
  significantly better to interpret than drawings that minimize 
  the number of crossings.
  Therefore, we consider the following problem in this paper.
  
  \begin{problem}[\textsc{Turn Minimization}]
    \label{def:tstm}
    Given an event graph~$\mathcal{E}$, find a time-space diagram
    of~$\mathcal{E}$ that minimizes the total number of turns along
    the train lines defined by~$\mathcal{E}$.
  \end{problem}

  \subparagraph*{A connection to \textsc{Maximum Betweenness}.}
  Note that the number of turns in a time-space diagram is determined
  solely by the function~$y$ which represents an ordering of the locations.
  Therefore, \textsc{Turn Minimization} is closely related to the following problem. 
  \begin{problem}[\textsc{Maximum Betweenness} \cite{opatrny-betweenness}]
    \label{def:betweenness}
    Let $S$ be a finite set, and let $R \subseteq S\times S\times S$ be a
    finite set of ordered triplets called \emph{restrictions}. A total
    order $\prec$ \emph{satisfies} a restriction $(a, b, c) \in R$ if  
    either  $a \prec b \prec c$ or $c \prec b \prec a$ holds. Find
    a total order that maximizes the number of satisfied restrictions.
  \end{problem}

  In fact, there is a straightforward translation that transforms optimal
  solutions of \textsc{Turn Minimization} to optimal solutions of \textsc{Maximum Betweenness}
  and vice versa.
  However, note that the objective functions of these problems differ. 
  In \textsc{Turn Minimization} we minimize the number of turns, which corresponds to 
  minimizing the number of \emph{unsatisfied} restrictions in 
  \textsc{Maximum Betweenness}. 

  \textsc{Maximum Betweenness} and other slightly modified variants have been studied 
  extensively \cite{max-betweenness-max-snp,max-betweenness-genetic-algo,
  force-based-max-betweenness,mip-max-betweenness,consecutive-ones-and-betweenness,
  betweenness-application} mainly motivated by 
  applications in biology.
  In particular, Opatrny showed that \textsc{Maximum Betweenness}
  is \NP-hard~\cite{opatrny-betweenness}.
  Further, the problem admits 1/2-approximation 
  algorithms~\cite{geometric-betweenness,linear-time-approx-max-betweennness}, 
  but for any $\varepsilon>0$ it is \NP-hard to compute a 
  $(1/2+\varepsilon)$-approximation~\cite{inapprox}.
  
  Several exact algorithms have been proposed.
  There is an intuitive integer linear program formulation that uses
  linear orderings.
  This formulation has been used in various variants and algorithms as
  a baseline before \cite{consecutive-ones-and-betweenness,betweenness-application}.
  We state this formulation in \cref{sec:ilps} since we also use this formulation
  as a starting point.
  Note that this formulation requires $\mathcal{O}(|S|^3)$ many 
  constraints due the transitivity constraints for the linear ordering and
  the fact that there can be $\mathcal{O}(|S|^3)$ many restrictions. 
  This may result in long computation times.
  Since applications in biology often deal with a large number of restrictions,
  improvements of this formulation have been made via cutting-plane
  approaches that target the transitivity and the restriction constraints 
  using the trivial lifting technique\footnote{Here: complete linear 
  descriptions of smaller instances $(S, R)$ are used to generate valid 
  inequalities for larger instances $(S', R')$ where $S \subseteq S'$, $R \subseteq R'$.}
  \cite{consecutive-ones-and-betweenness,betweenness-application}.
  To overcome the issue of a cubic number of transitivity constraints, 
  a mixed-integer linear program has been 
  proposed \cite{mip-max-betweenness} that circumvents the large 
  number of constraints entirely by modelling a linear ordering with continuous
  variables that are required to be distinct.
  As a consequence, this formulation runs faster than the intuitive formulation.
  However, this program requires a user-specified parameter that influences
  the runtime significantly and is not obvious to choose.

  Note that due to the relationship between \textsc{Turn Minimization} and 
  \textsc{Maximum Betweenness}, exact algorithmic approaches for one of the 
  problems can be directly transferred to the other, while approximation
  guarantees cannot.
  However, most of the algorithmic approaches for \textsc{Maximum Betweenness}
  are optimized for instances where the ratio between~$|R|$ and~$|S|$ is large, 
  while this ratio is only moderate in our setting.
  As a result, in our setting, the transitivity constraints become 
  the bottleneck as opposed to the constraints that model the restrictions in~$R$.
  Another notable difference is that in \textsc{Turn Minimization} we have access to 
  additional information on the relations between restrictions (the train lines). 
  We strive to leverage these differences to develop new approaches. 
  In particular, we consider the case that instances admit
  drawings with a small number of turns (as otherwise a drawing would not 
  be comprehensible). Further, we consider the case that 
  the underlying infrastructure of the event graph is sparse (as this is
  common in train infrastructure).

  \subparagraph*{Contribution.} 
  First, we consider \textsc{Turn Minimization} from a
  (parameterized) complexity theoretic perspective; see \cref{sec:problem-complexity}.
  In particular, we show that it is \NP-hard to compute an
  $\alpha$-approximation for any constant $\alpha \geq 1$ and that 
  the problem is \paraNP-hard when parameterized by the number of turns.
  The problem is also \paraNP-hard when parameterized 
  by the vertex cover number of the location graph, a graph that represents 
  the infrastructure.
  Second, we propose a preprocessing strategy that reduces a given
  event graph~$\mathcal{E}$ into a smaller event graph~$\mathcal{E}'$ that
  admits drawings with the same number of turns; see \cref{sec:reduction-rule}.
  Third, we refine the intuitive integer linear program in two
  different ways; see \cref{sec:ilps}. 
  The first refinement is a simple cutting-plane approach that iteratively adds
  transitivity constraints until a valid (optimal) solution is found.
  The second refinement uses a tree decomposition to find a light-weight 
  formulation of the problem.
  We conclude with an experimental analysis and future work in 
  \cref{sec:experiments,sec:conclusion}.
  While the preprocessing strategy cannot be easily translated to the
  \textsc{Maximum Betweenness} problem, our remaining results carry over.

  \section{Preliminaries}\label{sec:preliminaries}
  Let $S$ be a finite set, and $k$ be a positive integer. Let $[S]^k$ denote
  the set $\{X \colon X \subseteq S, |X| = k\}$ of $k$-element subsets of $S$.
  We call $(A, B)$ with $A\cup B = V(G)$ a \emph{separation} of a graph~$G$ 
  if, on every $a$--$b$ path with $a \in A$ and $b \in B$, there is at least
  one vertex in $A\cap B$. We call $A\cap B$ a \emph{separator}.
  If a separator is a single vertex, we call this vertex a
  \emph{cut vertex}. If a separator consists of two vertices,
  then we call the two vertices a \emph{separating pair}.
  We say that a connected graph is \emph{biconnected} 
  if it does not contain a cut vertex. 
  Similarly, we say that a biconnected graph is \emph{triconnected} 
  if it does not contain a separating pair.
  A partition of $(A, B)$ of the vertex set of a graph is a \emph{cut}.
  The \emph{cut set} of $(A, B)$ is the set of edges with one vertex in~$A$ 
  and one vertex in~$B$. The size of a cut is the size of the cut set.

  Problems that can be solved in time $f(k)\cdot n^c$, where $f$ is a
  computable function, $c > 0$ is a constant, $n$ the input size, and $k$
  is a parameter, are known as \emph{fixed-parameter tractable}
  (parameterized by the parameter $k$). The complexity class \textsf{FPT}
  contains precisely all such fixed parameter tractable problems with the
  respective parameter $k$. 
  If a problem remains \NP-hard even on instances, where the parameter $k$
  is bounded, we say that  the problem is \paraNP-hard when parameterized
  by $k$.
  The study of \FPT{} algorithms is particularly
  motivated by scenarios where certain instances have properties, described
  by the parameter, that are small or constant, making \FPT{} algorithms
  efficient for these instances.
  
  We define two auxiliary graphs that capture the connections 
  between locations in~$\mathcal{E}$, which we use in our algorithms.

  \begin{definition}[Location Graph]
    Let $\mathcal{E}$ be an event graph.  The \emph{location graph}
    $\mathcal{L}$ of~$\mathcal{E}$ is an undirected weighted graph
    whose vertices are the locations of $\mathcal{E}$.
    For two locations $p \ne q$, the weight $w(\{p, q\})$ of the 
    edge~$\{p, q\}$ in $\mathcal{L}$ corresponds to the number of
    arcs $(u,v)$ or $(v,u)$ in the event graph $\mathcal{E}$
    such that $\locationOf{u} = p$ and $\locationOf{v} = q$ and
    $\trainOf{u} = \trainOf{v}$.
    If $w(\{p, q\})=0$, then~$p$ and $q$ are not adjacent in~$\mathcal{L}$.
  \end{definition}
  See \cref{fig:diagram-example}b for an example of a location graph.
  Note that the train line of a train $z$ in the event graph 
  corresponds to a walk (a not necessarily simple path) in the location
  graph. Slightly abusing notation, we also call this walk in
  the location graph a \emph{train line} of~$z$. 

  \begin{definition}[Augmented Location Graph]\label{def:augmented-location-graph}
    Let $\mathcal{E}$ be an event graph. 
    The \emph{augmented location graph} $\mathcal{L}'$ of $\mathcal{E}$ is
    the supergraph of the location graph $\mathcal{L}$ of $\mathcal{E}$
    that additionally contains, for each triplet
    $(v, v', v'')$ of locations that are consecutive
    along a train line in~$\mathcal{E}$, the edge
    $\{\locationOf{v}, \locationOf{v''}\}$.
  \end{definition}
  The augmented graph $\mathcal{L}'$ has the crucial property that
  every three such consecutive events, whose locations can potentially
  cause a turn, induce a triangle in~$\mathcal{L}'$.

  \subparagraph{Tree decompositions.}

  Intuitively, a tree decomposition is a decomposition of a 
  graph~$G$ into a tree $T$ which gives structural information 
  about the separability of $G$. 
  The treewidth of a graph $G$ is a measure that captures how 
  similar a graph $G$ is to a tree. 
  For instance, every tree has treewidth~1, the graph of a 
  $(k \times k)$-grid has treewidth~$k$, and the treewidth 
  of the complete graph~$K_n$ is $n-1$.  More formally, a \emph{tree
    decomposition} $\mathcal{T} = (T, \{X_t\}_{t \in V(T)})$ of~$G$
  consists of a tree~$T$ and, for each node~$t$ of~$T$, of a
  subset~$X_t$ of $V(G)$ called \emph{bag} such that (see
  \cref{fig:tree-decomp-example} for an example):
  \begin{enumerate}[(T1)]
  \item the union of all bags is $V(G)$,

  \item\label{prop:tree-decomp-t2} for every edge $\{u, v\}$ of~$G$,
    the tree~$T$ contains a node~$t$ such that
    $\{u, v\} \subseteq X_t$, and

  \item\label{prop:tree-decomp-t3} for every vertex~$v$ of~$G$, the
    nodes whose bags contain $v$ induce a connected subgraph of~$T$.
  \end{enumerate}
  \begin{figure}
    \centering
    \includegraphics[page=1]{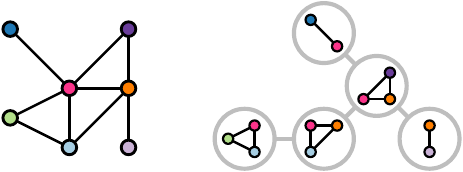}
    \caption{Example of a tree decomposition (right) of the graph on the left.
      Each bag of the tree decomposition is depicted as the graph it induces.}
    \label{fig:tree-decomp-example}
  \end{figure}
  The \emph{width} of a tree decomposition is defined as 
  $\max\{|X_t|\colon t \in V(T)\}-1$; for example, the tree
  decomposition in \cref{fig:tree-decomp-example} has width~2.
  The \emph{treewidth} of a graph~$G$,
  $\operatorname{tw}(G)$, is the smallest value such that 
  $G$ admits a tree decomposition of this width. 
  Any tree decomposition $(T, \{X_t\}_{t \in V(T)})$ has the
  following properties.
  \begin{enumerate}[(P1)]
  \item\label{prop:tree-decomp-separator} For every edge $\{a, b\}$ of
    $T$, the graph $T - \{a,b\}$ has two connected components~$T_a$
    and~$T_b$, where $T_a$ contains~$a$ and $T_b$ contains~$b$.  They
    induce a separation
    $(A, B) = (\bigcup_{t\in V(T_a)} X_t, \bigcup_{t\in V(T_b)}X_t)$
    of $G$ with separator $X_a \cap X_b$. In particular,
    $A \cap B = X_a \cap X_b$, and $G$ does not contain any edge
    between a vertex in $A\setminus X_a$ and a vertex in
    $B\setminus X_b$.

  \item\label{prop:tree-decomp-clique} For every clique $K$ of $G$,
    the tree~$T$ contains a node~$t$ such that $V(K) \subseteq X_t$.
  \end{enumerate}

  \section{Problem Complexity}
  \label{sec:problem-complexity}
  
  We study the approximability of \textsc{Turn Minimization} and consider the tractability
  of \textsc{Turn Minimization} with respect to parameters that we expect to be small
  in our instances. We obtain negative results for the approximability and for most of
  the considered parameters, but we propose an \FPT{} algorithm
  parameterized by the treewidth of the augmented location graph.
  
  While \NP-hardness of \textsc{Turn Minimization} is easy to see due to the one-to-one 
  correspondence (of problem instances and optimal solutions)
  to \textsc{Maximum Betweenness}, 
  approximability results for \textsc{Maximum Betweenness} do not carry over
  because of the different objectives.
  In fact, by close inspection of the natural transformation between instances
  of the two problems (that we give in the proof of \cref{thm:problem-complexity}),
  we are able to deduce that there is neither a multiplicative nor an additive 
  constant factor approximation algorithm for \textsc{Turn Minimization}, unless $\Polytime=\NP$.
  Furthermore, we can derive from the reduction that, unless $\Polytime=\NP$, 
  there is no efficient algorithm even for instances where the number of turns is bounded. 
  By a reduction from the decision version of \textsc{MaxCut}, we can also show that
  the problem is \NP-hard when parametrized by the vertex cover number.
  \textsc{MaxCut} asks whether there is a cut of size at least $k$ in a given graph.

  \begin{restatable}[\restateref{thm:problem-complexity}]{theorem}{problemcomplexity}
    \label{thm:problem-complexity}
    \textsc{Turn Minimization}
    \begin{enumerate}[(i)] 
    \item\label{para-np-turns} is  \paraNP-hard with respect to the
      natural parameter, the number of turns,
    \item\label{para-np-vc} is  \paraNP-hard with respect to the
      vertex cover number of the location graph,
    \item\label{inapprox} does not admit polynomial-time multiplicative or
      additive approximation algorithms unless $\Polytime=\NP$.
    \end{enumerate}
  \end{restatable}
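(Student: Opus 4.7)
The plan is to establish all three claims via reductions. For parts~(\ref{para-np-turns}) and~(\ref{inapprox}), I would use the natural transformation from \textsc{Maximum Betweenness}. Given an instance $(S, R)$, construct an event graph whose locations are the elements of $S$ and whose train lines encode the restrictions: for each $(a,b,c)\in R$, introduce a train that visits $a$, $b$, $c$ in this order at strictly increasing times. A time-space diagram amounts to a total order on $S$, and the triple $(a,b,c)$ produces a turn precisely when $b$ is not between $a$ and $c$ in this order. Hence the number of turns in any drawing equals $|R|$ minus the number of satisfied restrictions. In particular, a drawing with zero turns exists iff all restrictions can be simultaneously satisfied, which is \NP-hard to decide by Opatrny's classical result~\cite{opatrny-betweenness}. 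This already implies~(\ref{para-np-turns}), since fixing the parameter (the number of turns) to $0$ still leaves an \NP-hard problem.

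For~(\ref{inapprox}), the same reduction maps yes-instances of the betweenness decision problem to event graphs of optimum $0$ and no-instances to event graphs of optimum at least~$1$. A polynomial-time multiplicative $\alpha$-approximation with constant $\alpha\ge 1$ must output $0$ on an optimum-$0$ instance and therefore decides the \NP-hard problem. For an additive $c$-approximation, I would first take $c+1$ pairwise disjoint copies of a reduction output (disjoint locations, renumbered trains, non-overlapping time intervals). Since level assignments on the copies do not interact, the optimum is additive under disjoint union, so the combined optimum is either $0$ or at least $c+1$; an additive-$c$ estimate then distinguishes these cases, again contradicting $\Polytime\ne\NP$.

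For~(\ref{para-np-vc}), I would reduce from the \NP-hard decision version of \textsc{MaxCut}. Given a graph $G$ together with a bound $k$, introduce a single auxiliary location $w$ and one location per vertex of~$G$. For each edge $\{u,v\}\in E(G)$, create a train whose events are $u$, $w$, $v$ at three distinct times. Any level assignment partitions $V(G)$ into the vertices placed below $w$ and those placed above $w$, and the train for $\{u,v\}$ is turn-free precisely when $u$ and $v$ lie on opposite sides of $w$. Hence the number of turns equals $|E(G)|$ minus the size of the induced cut, and $G$ admits a cut of size at least $k$ iff the event graph admits a drawing with at most $|E(G)|-k$ turns. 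Every edge of the location graph is incident to $w$, so the location graph is a star and has vertex cover number~$1$, yielding \paraNP-hardness with respect to this parameter.

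The main obstacle is carefully verifying the disjoint-union step in~(\ref{inapprox}): one has to check that concatenating several disjoint event graphs yields a valid event graph whose optimum is exactly the sum of the parts. This follows because optimal drawings of the components can be stacked into non-overlapping level ranges, so no triple of consecutive events on a train line spans two components. The remaining ingredients are direct, both reductions hinging on the same observation that a turn in the train for $a,b,c$ corresponds exactly to the non-betweenness of $b$ between $a$ and $c$ in the chosen level order.
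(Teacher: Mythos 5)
Your proposal is correct and follows essentially the same route as the paper: the same reduction from \textsc{Betweenness} (one three-event train per restriction, so a turn-free drawing exists iff all restrictions are simultaneously satisfiable) gives~(i) and the multiplicative part of~(iii), and the same star-shaped reduction from \textsc{MaxCut} through a single central location gives~(ii). The only deviation is in the additive argument: you amplify the gap by taking $c+1$ pairwise disjoint copies of the entire instance (which requires the additivity-under-disjoint-union observation you correctly supply), whereas the paper duplicates each three-event gadget $\beta+1$ times on the \emph{same} location set so that every drawing's turn count is automatically divisible by $\beta+1$ --- both amplifications are valid and achieve the same effect.
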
\medskip

  Note that \eqref{para-np-vc} also implies that \textsc{Turn Minimization} is \paraNP-hard
  if parameterized by the treewidth of the location graph $\mathcal{L}$ since
  the treewidth of a graph is bounded by the vertex cover number of the graph.
  Therefore, there is no  fixed-parameter tractable algorithm
  with respect to the treewidth of $\mathcal{L}$, unless $\Polytime = \NP$. 
  However, we obtain the following result.

  \begin{restatable}[\restateref{thm:fpt-tw}]{theorem}{paramalgo}
    \label{thm:fpt-tw}
    Let $\mathcal{E}$ be an event graph, and let $\mathcal{L}'$ be its
    augmented location graph.  Computing a turn-optimal time-space
    diagram of $\mathcal{E}$ is fixed-parameter tractable
    with respect to the treewidth of $\mathcal{L}'$.
  \end{restatable}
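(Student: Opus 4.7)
My plan is to design a dynamic program over a nice tree decomposition of~$\mathcal{L}'$. First, I would compute a tree decomposition $(T, \{X_t\}_{t \in V(T)})$ of~$\mathcal{L}'$ of width $O(\operatorname{tw}(\mathcal{L}'))$ in FPT time (e.g., via Bodlaender's algorithm), convert it into a nice tree decomposition, and root it at an arbitrary node~$r$.

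The key structural observation is that, by \cref{def:augmented-location-graph}, every triple $\tau=(p,q,r)$ of three pairwise distinct consecutive locations along a train line forms a triangle in~$\mathcal{L}'$. By property~\ref{prop:tree-decomp-clique}, some bag of the decomposition therefore contains $\{p,q,r\}$ entirely, and I fix $t^\star(\tau)$ to be the topmost such node (closest to the root~$r$). This charges each potential turn to a single bag in which all three of its locations appear.

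The DP state is a pair $(t,\sigma)$, where $\sigma$ is a total order on~$X_t$; its value $\operatorname{opt}(t,\sigma)$ is the minimum, taken over all total orders~$\pi$ on $\bigcup_{s\in V(T_t)} X_s$ (with $T_t$ the subtree rooted at~$t$) that restrict to~$\sigma$ on~$X_t$, of the number of triples~$\tau$ with $t^\star(\tau)\in V(T_t)$ that are turns under~$\pi$. Since the three locations of such a~$\tau$ all lie in~$X_{t^\star(\tau)}$, whether it is a turn is determined purely by the ordering of that bag, so the state carries all necessary information. I would compute the DP bottom-up with the standard nice transitions, adding at each node~$t$ the cost $c(t,\sigma)$ of the triples charged to~$t$ that are turns under~$\sigma$: an introduce node carries over the unique restriction of~$\sigma$; a forget node takes the minimum over the at most $|X_t|+1$ insertion positions of the forgotten vertex; and a join node sums the two child values for matching orderings on the shared bag. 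The final answer is $\min_\sigma \operatorname{opt}(r,\sigma)$.

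With $w=\operatorname{tw}(\mathcal{L}')$, each bag admits at most $(w+1)!$ orderings, and every transition runs in $f(w)\cdot\operatorname{poly}(n)$ time, establishing fixed-parameter tractability. The main obstacle is arranging the DP so that every turn-inducing triple is counted \emph{exactly once} and that each such triple is evaluable at the bag to which it is charged; both issues are resolved by the $t^\star$-charging scheme, whose existence is guaranteed by property~\ref{prop:tree-decomp-clique} and whose uniqueness follows from taking the topmost node in the connected subtree of bags containing the triangle.
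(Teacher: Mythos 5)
Your proposal is correct and follows essentially the same route as the paper: a bottom-up dynamic program over a nice tree decomposition of~$\mathcal{L}'$ whose states are total orders of the bags, justified by the observation that every potential turn induces a triangle in~$\mathcal{L}'$ and hence (by Property~(P\ref{prop:tree-decomp-clique})) lies entirely inside some bag. The only difference is bookkeeping: you charge each triple once to its unique topmost bag, whereas the paper counts turns at the introduce node of the last-introduced location and subtracts the double-counted turns at join nodes; both schemes are valid and yield the same FPT bound.
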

  \begin{proof}[Proof sketch.]
    For every triplet of consecutive events, the corresponding
    locations form a triangle in~$\mathcal{L}'$.  Hence, due to
    Property~(P\ref{prop:tree-decomp-clique}), in any tree
    decomposition of~$\mathcal{L}'$, there is a bag that contains the
    three locations.  Thus, every potential turn occurs in at least
    one bag, and it suffices to run a standard dynamic program over a
    (nice) tree decomposition of~$\mathcal{L}'$.
  \end{proof}

  Note that this result might not be practical since the treewidth of the
  augmented location graph might be considerably
  larger than the treewidth of the location graph.

  \section{An Exact Reduction Rule}\label{sec:reduction-rule}
  
  In this section we describe how to reduce the event graph based
  on the identification and contraction of simple substructures in the
  location graph.
  Consider the location graph $\mathcal{L}$ of an event graph~$\mathcal{E}$. 
  We call a vertex $p$ of~$\mathcal{L}$ a \emph{terminal} if a train starts 
  or ends at~$p$. We say that a path in $\mathcal{L}$ is a \emph{chain} 
  if each of its vertices has degree exactly~2 in~$\mathcal{L}$ and the 
  path cannot be extended without violating this property. 
  If a chain contains no terminals, and no train line restricted to this
  chain induces a cycle, then there is always a turn-minimal drawing of
  $\mathcal{E}$ 
  that contains no turn along the chain. 
  This is due to the fact that any turn on the chain
  can be moved to a non-chain vertex adjacent to one of
  the chain endpoints.  We now generalize this intuition, assuming
  that $\mathcal{L}$ is not triconnected.

  Let $\{s,t\}$ be a separating pair of~$\mathcal{L}$, let $C$ be a
  connected component of $\mathcal{L}\setminus\{s,t\}$, and let 
  $C'=\mathcal{L}[V(C) \cup \{s,t\}]$.  
  We call $C$ a \emph{transit component} if (i) $C$ does not 
  contain any terminal, and (ii) the trains passing through~$C'$ 
  via $s$ also pass through~$t$ (before possibly passing through $s$ again).
  A transit component~$C$ is \emph{contractible} if, for each path
  associated to a train line in~$C$, we can assign a direction such
  that the resulting directed graph is acyclic.

  \begin{reductionrule}[Transit Component Contraction]
    \label{rule:transit-components}
    Let $\mathcal{E}$ be an event graph, and let $\mathcal{L}$ be the
    location graph of~$\mathcal{E}$.  If $\mathcal{L}$ is not
    triconnected, then let $\{s,t\}$ be a separating pair
    of~$\mathcal{L}$ and let~$C$ be a contractible transit component
    of~$\mathcal{L} \setminus \{s,t\}$.
    For each train~$z$ that traverses~$C$, replace in~$\mathcal{E}$ the
    part of the train line of $z$ between the events that correspond
    to~$s$ and~$t$ by the arc (directed according to time)
    that connects the two events.
  \end{reductionrule}

  \begin{theorem}
    \label{lemma:soundness-transit-component}
    Let $\mathcal{E}$ be an event graph.  If $\mathcal{E}'$ is an
    event graph that results from applying
    \cref{rule:transit-components} to~$\mathcal{E}$, then a
    turn-optimal drawing of~$\mathcal{E}$ and a turn-optimal drawing
    of~$\mathcal{E}'$ have the same number of turns.
  \end{theorem}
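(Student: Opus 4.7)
The plan is to prove the equality by showing both directions. Since locations outside $V(C)$ and train-line segments that do not enter $C$ contribute identical subdrawings in $\mathcal{E}$ and $\mathcal{E}'$, I can restrict attention to a single visit of a train to $C'=\mathcal{L}[V(C)\cup\{s,t\}]$, which in $\mathcal{E}$ has the form $\dots,p,s,c_1,\dots,c_k,t,q,\dots$ and in $\mathcal{E}'$ collapses to $\dots,p,s,t,q,\dots$. I will show that (i) any drawing $\Gamma$ of $\mathcal{E}$ yields a drawing $\Gamma'$ of $\mathcal{E}'$ with at most as many turns, and (ii) any drawing $\Gamma'$ of $\mathcal{E}'$ extends to a drawing $\Gamma$ of $\mathcal{E}$ with exactly the same number of turns.

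For (i), I obtain $\Gamma'$ from $\Gamma$ by discarding the levels of all locations in $V(C)$ and replacing the level function by the rank among the surviving locations, which preserves every pairwise order. The central ingredient is a purely one-dimensional lemma: for any finite sequence of distinct reals, deleting some of its interior elements cannot increase the number of interior local extrema. I would prove this by induction on the number of deletions, where the inductive step reduces to a short case analysis on the three consecutive signs $\text{sign}(a_{r-1}-a_{r-2})$, $\text{sign}(a_{r}-a_{r-1})$, $\text{sign}(a_{r+1}-a_{r})$ before and after removing $a_{r}$. Applied to the $y$-values along each visit to $C$, this lemma bounds the turns contributed at $s$ and $t$ in $\Gamma'$ by the turns contributed at $s,c_1,\dots,c_k,t$ in $\Gamma$; summing over all visits of all trains yields $\text{turns}(\Gamma')\le\text{turns}(\Gamma)$.

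For (ii), I exploit the contractibility of $C$: by definition, every train-line path in $C$ admits an orientation such that the union of these oriented paths is a DAG. Since by the transit-component property each such path has $s$ and $t$ as its $C'$-endpoints, and since two paths oriented in opposite $s$-$t$ directions would close a cycle in the union, after a global reversal if necessary every oriented path runs from $s$ to $t$. Hence $s$ is a source and $t$ is a sink in the DAG on $V(C')$, and a topological order $\pi$ can be chosen with $s$ first and $t$ last. Starting from $\Gamma'$ and assuming $y'(s)<y'(t)$ without loss of generality, I build $\Gamma$ by inserting the vertices of $V(C)$ as new levels immediately above $y'(s)$ in the order prescribed by $\pi$. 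A short verification then shows that every interior triple on a visit is strictly increasing in $y$ (so no new turn appears inside $C$) and that $y(c_1)$ and $y(c_k)$ both lie strictly between $y(s)$ and $y(t)$, so the sign comparisons determining the turns at $s$ and $t$ in $\Gamma$ agree with those in $\Gamma'$. Combining (i) and (ii) gives the theorem.

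I expect the main obstacle to be the structural step in (ii): rigorously deducing from contractibility and the transit property that the DAG on $V(C')$ admits a topological order with $s$ first and $t$ last. This requires careful bookkeeping when trains visit $C$ several times or when their walks in $\mathcal{L}$ revisit vertices, to ensure that acyclicity of the union of oriented train-line paths really forbids the conflicting $s$-to-$t$ and $t$-to-$s$ orientations that my argument rules out. The sequence lemma in (i), by contrast, is routine, and the remaining steps are bookkeeping.
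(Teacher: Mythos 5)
Your proof is correct, and one half of it genuinely diverges from the paper's argument. Your direction (ii) (extending a drawing of $\mathcal{E}'$ to a drawing of $\mathcal{E}$ by inserting the vertices of $C$ between $y'(s)$ and $y'(t)$ in a topological order in which every train line runs from $s$ to $t$) is essentially identical to the paper's proof of $k\le k'$, including the assertion that contractibility plus the transit property force a consistent $s$-to-$t$ orientation; your cycle-closing justification for that assertion is sound and in fact slightly more explicit than the paper's. For the converse direction the two arguments differ. The paper keeps the component and normalizes the given optimal drawing $\Gamma$: it splits $V(C)$ into the vertices $U$ above $t$ and $L$ below $s$, moves and topologically reorders them into the band between $s$ and $t$, and argues by a counting trade-off (each train line touching $U$ already pays a turn at its topmost vertex, so at least $\Delta$ turns are destroyed while at most $\Delta$ new ones appear at $t$) before contracting. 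You instead delete the interior events outright and invoke a one-dimensional lemma: removing interior elements from a sequence of distinct reals cannot increase the number of interior local extrema. That lemma is true (the only way a neighbour of a deleted element can become a new extremum is if the deleted element itself was one, and at most one neighbour can gain), and applied per visit it immediately gives $\mathrm{turns}(\Gamma')\le\mathrm{turns}(\Gamma)$. Your route avoids the paper's case distinction and the somewhat delicate ``at most $\Delta$ new turns at $t$'' bookkeeping entirely; what it gives up is the structural by-product of the paper's argument, namely that an optimal drawing of $\mathcal{E}$ can always be assumed to nest $C$ between $s$ and $t$. The caveats you flag (multiple visits, degenerate visits through $s$ or $t$ only) are handled at the same level of rigor in the paper, so they do not constitute a gap relative to it.
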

  \begin{proof}
    Let $k$ be the number of turns in a turn-optimal drawing $\Gamma$
    of an event graph $\mathcal{E}$ and let $k'$ be the number of turns
    in a turn-optimal drawing $\Gamma'$ of an event graph $\mathcal{E}$
    after \cref{rule:transit-components} was applied on the contractible 
    transit component~$C$. Further, for the sake of simplicity, assume 
    that any train traversing~$C$ traverses~$C$ only once. 
    If a train $z$ traverses~$C$ multiple times, the same arguments apply
    for each connected component of the train line of $z$ going through~$C$.
    Note that this can indeed happen (if the event graph represents a
    schedule that contains a train that moves through~$C$ periodically).

    Without increasing the number of turns, we now transform $\Gamma'$
    into a drawing of $\mathcal{E}$ that contains~$C$, as follows.
    Let $\{s,t\}$ be the pair that separates $C$ from
    $\mathcal{L} \setminus C$.  Let $y'(s)$ and $y'(t)$ be the levels
    of~$s$ and~$t$ in~$\Gamma'$, respectively.  Without loss of
    generality, assume that $y'(s) < y'(t)$.
    Since $C$ is contractible, $C$ admits an acyclic (topological)
    ordering of the locations in $C$ such that the train line of every 
    train traversing $C$ is directed from $s$ to $t$.
    Let $\prec_C$ be such an ordering and let $z$ be a train traversing
    $C$, where $z' = \langle v_1, \dots, v_j\rangle$ is the component of
    the train line of $z$ that traverses $C$ such that 
    $\locationOf{v_1} = s$ and $\locationOf{v_j}=t$. Since $\prec_C$ is a
    valid acyclic topological ordering, it holds that 
    $\locationOf{v_i} \prec_C \locationOf{v_{i+1}}$ for each $1 \leq i < j$. 
    Thus, by extending $y'$ so that each vertex $p$ in $C$ is assigned a 
    level $y'(p) \in ]y'(s), y'(t)[$, with $p, q \in V(C)$ and 
    $y'(p) < y'(q)$ if and only if $p \prec_C q$, no additional turns are 
    introduced in the transformed drawing. As a result, we obtain a
    drawing of $\mathcal{E}$ that contains $C$ and has the same number of
    turns as $\Gamma'$.  Hence $k \leq k'$.

    Conversely, we transform $\Gamma$ into a drawing with at most $k$
    turns where the component~$C$ is contracted.  Let $\{s,t\}$ be the
    pair that separates $C$ from $\mathcal{L} \setminus C$.  Let
    $y(s)$ and $y(t)$ be the levels of $s$ and $t$ in~$\Gamma$.
    Without loss of generality, $y(s) < y(t)$.
    First, assume that for each $p \in V(C)$ it holds that
    $y(s) < y(p) < y(t)$.  Then, contracting~$C$ into a single
    edge transforms~$\Gamma$ into a drawing of the reduced instance with
    at most $k$ turns. 
    Now, let $L \subseteq V(C)$ be the set of vertices below~$s$, and
    let $U \subseteq V(C)$ be the set of vertices above~$t$.
    We describe only how to handle $U$ since $L$ can be handled analogously.
    Let $\Delta$ be the number of train lines with at least one vertex
    in $U$. We reorder the levels of the vertices in $U$ 
    according to a topological ordering of $C$ restricted to $U$ and
    move all vertices in $U$ such that their levels
    are between $y(t)$ and
    $Y = \max \{ y(p') \colon y(p') < y(t), p'\in V(C) \}$. 
    Each train line with at least one vertex in $U$ corresponds to at
    least one turn in $U$, namely a turn at the vertex of a train line
    with the largest level.
    Therefore, moving and reordering $U$ removes at least $\Delta$ turns.
    Also, the movement and reordering of $U$ results in at most $\Delta$
    turns more at $t$ since the only vertices that were moved are
    vertices in $U$. 
    After moving and reordering $U$ and $L$, we are in the first case
    and can hence contract~$C$.
    Summing up, we can transform a turn-optimal drawing $\Gamma$ of
    an event graph $\mathcal{E}$ into a drawing with a contracted
    component $C$ without changing the number of turns, implying
    that $k' \leq k$.
    
    We conclude that \cref{rule:transit-components} is sound. 
  \end{proof}

  Note that we can apply \cref{rule:transit-components} exhaustively
  in cubic time (in the number of vertices and edges of~$\mathcal{L}$ 
  and~$\mathcal{E}$): we
  iterate over all (up to $\mathcal{O}(|V(\mathcal{L})|^2)$ many)
  separating pairs and contract each connected component with respect
  to the current pair, if possible.  Below we show that, using two
  data structures, we can speed up the application of
  \cref{rule:transit-components} considerably.

  \subparagraph*{Block-cut trees.}

  A \emph{block-cut tree} represents a decomposition of a graph into
  maximal biconnected components (called \emph{blocks}) and cut
  vertices.  Given a graph~$G$, let $\mathcal{C}$ be the set of cut
  vertices of~$G$, and let $\mathcal{B}$ be the set of blocks of~$G$.
  Note that two blocks share at most one vertex with each other,
  namely a cut vertex.  The block-cut tree $\mathcal{T}_\mathrm{bc}$ of~$G$ (see
  \cref{fig:bc-tree-example} for an example) has a node for each
  element of $\mathcal{B} \cup \mathcal{C}$ and an edge between
  $b \in \mathcal{B}$ and $c \in \mathcal{C}$ if and only if $c$ is
  contained in the component represented by $b$.
  Note that the leaves of a block-cut tree are $\mathcal{B}$-nodes.
  For example, the block-cut tree of a biconnected graph is a single node.
  The block-cut tree of an $n$-vertex path is itself a path,
  with $2n-3$ nodes.

  \begin{figure}
    \centering
    \includegraphics[page=2]{tree-examples.pdf}
    \caption{The block-cut tree (right) of the graph (left). The cut vertices
    are colored and are represented as circles in the block-cut tree. The maximal
    biconnected components are represented as rectangles.}
    \label{fig:bc-tree-example}
  \end{figure}

  \subparagraph*{SPQR-trees.}
  If a graph $G$ is biconnected, an \emph{SPQR-tree} represents the 
  decomposition of $G$ into its triconnected components via separating pairs, 
  where S (series), P (parallel), Q (a single edge), and R (remaining or rigid) 
  stand for the different node types of the tree that represent 
  how the triconnected components compose $G$. 
  An SPQR-tree represents all planar embeddings of a graph.
  Therefore, SPQR-trees are widely used in graph drawing and beyond 
  \cite{mutzel-spqr-applications}.
  SPQR-trees are defined in several ways in the literature.
  Here, we recall the definition of Gutwenger and
  Mutzel~\cite{spqr-tree-linear-time}, which is based on an earlier
  definition of Di Battista and Tamassia~\cite{spqr-on-line-planarity}.
  
  Let $G$ be a multi-graph.
  A \emph{split pair} is either a separating pair or a pair of adjacent vertices.
  A \emph{split component} of a split pair $\{u, v\}$ is either an edge $\{u, v\}$
  or a maximal subgraph $C$ (containing $\{u, v\}$) of $G$ such that $\{u, v\}$ 
  is not a split pair of $C$.
  Let $\{s, t\}$ be a split pair of $G$.
  A \emph{maximal split pair} $\{u, v\}$ of $G$ with respect to 
  $\{s, t\}$ is a pair such that, for any other split pair $\{u', v'\}$, 
  the vertices $u$, $v$, $s$, and $t$ are in the same split component.
  Let $e=\{s, t\}$ be an edge of $G$ called \emph{reference edge}.
  The SPQR-tree $\mathcal{T}_\mathrm{spqr}$ of $G$ with respect to~$e$ is a rooted tree whose
  nodes are of four types: S, P, Q, and R. Each node $\mu$ of $\mathcal{T}_\mathrm{spqr}$ 
  has an associated biconnected multi-graph called the \emph{skeleton} of $\mu$.
  Every vertex in a skeleton corresponds to a vertex in $G$ and every edge $\{u, v\}$
  in a skeleton corresponds to a child of $\mu$ that represents a split component
  of $G$ that is separated by $\{u, v\}$. 
  The tree $\mathcal{T}_\mathrm{spqr}$ is recursively defined as follows (see
  \cref{fig:spqr-tree-example} for an example):
  \begin{figure}
    \centering
    \includegraphics[page=3]{tree-examples.pdf}
    \caption{The SPQR-tree (right) of the graph on the left,
      with respect to the edge~$e$.  The
      nodes of the tree are the rectangles.  Each rectangle contains
      the skeleton of the corresponding node.  The dashed edge in the
      representation of a node~$\mu$ is the virtual edge of the parent
      of~$\mu$.  The thick purple edges in the skeletons represent
      child components.  The solid black edges are the real edges of
      the graph.  Q-nodes are omitted for simplicity.  In each
      rectangle, the two green vertices are the poles of the
      corresponding skeleton. The grey P-node is the unique child
      of the root (a Q-node) which represents the reference edge $e$.}
    \label{fig:spqr-tree-example}
  \end{figure}
  \begin{description}
    \item[Trivial Case:] If $G$ consists of exactly two parallel edges between
    $s$ and $t$, then $\mathcal{T}_\mathrm{spqr}$ consists of a single Q-node whose skeleton is
    $G$ itself.
    \item[Parallel Case:] If the split pair $\{s, t\}$ has at least three split
    components $G_1, \dots, G_k$, the root of $\mathcal{T}_\mathrm{spqr}$ is a P-node $\mu$
    whose skeleton consists of $k$ parallel edges $e_1, \dots, e_k$ between
    $s$ and~$t$.
    \item[Series Case:] If the split pair $\{s, t\}$ has exactly two 
    split components, one of them is $e$, and the other is denoted with $G'$.
    If $G'$ has cut vertices $c_1, \dots, c_{k-1}$ ($k \geq 2$) that partition
    $G$ into its blocks $G_1, \dots, G_k$, in this order from $s$ to $t$,
    the root of $\mathcal{T}_\mathrm{spqr}$ is an S-node $\mu$ whose skeleton is the cycle
    $e_0, e_1, \dots, e_k$, where $e_0 = e$, $c_0 = s$, $c_k = t$, and 
    $e_i=(c_{i-1}, c_i)$ for $i = 1, \dots, k$.
    \item[Rigid Case:] If none of the above cases applies, 
    let $\{s_1, t_1\}, \dots, \{s_k, t_k\}$ be the maximal split pairs of $G$ 
    with respect to $\{s, t\}$ and let $G_i$ ($i=1, \dots, k$) be the union
    of all the split components of $\{s_i, t_i\}$ but the one containing $e$.
    The root of $\mathcal{T}_\mathrm{spqr}$ is an R-node whose skeleton is obtained from $G$
    by replacing each subgraph $G_i$ with the edge $e_i=\{s_i, t_i\}$.
  \end{description}

  Except for the trivial case, every node~$\mu$ of~$\mathcal{T}_\mathrm{spqr}$ 
  has children $\mu_1, \dots, \mu_k$ such that
  $\mu_i$ is the root of the SPQR-tree of $G_i\cup e_i$ with respect to $e_i$.
  The reference edge $e$ is represented by a Q-node which is the root of 
  $\mathcal{T}_\mathrm{spqr}$.
  Each edge~$e_i$ in the skeleton of $\mu$ is associated with the child $\mu_i$
  of $\mu$. This edge is also present in $\mu_i$ and is called \emph{virtual edge}
  in~$\mu_i$. The endpoints of edge $e_i$ are called \emph{poles} of the node~$\mu_i$.
  The \emph{pertinent graph} $G_\mu$ of $\mu$ is the subgraph of $G$ that 
  corresponds to the real edges (Q-nodes) in the subtree rooted at~$\mu$.

  \begin{theorem}
    If $\mathcal{L}$ is the location graph of an event graph
    $\mathcal{E}$, then \cref{rule:transit-components} can be
    applied exhaustively in time linear in the number of vertices
    and edges of~$\mathcal{L}$ and~$\mathcal{E}$.
  \end{theorem}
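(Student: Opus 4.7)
The plan is to enumerate candidate transit components via the SPQR-trees of the biconnected blocks of~$\mathcal{L}$ together with its block-cut tree, and to test and contract each candidate in amortized constant time. The key observation is that every separating pair $\{s,t\}$ of~$\mathcal{L}$ is either (a)~a split pair of a single biconnected block~$B$, corresponding to the pole pair of an S-, P-, or R-node of $\mathcal{T}_\mathrm{spqr}(B)$, or (b)~a pair containing a cut vertex whose effect is captured by an edge of the block-cut tree. Accordingly, candidate transit components correspond to pertinent graphs of SPQR-nodes, together with subtrees of~$\mathcal{T}_\mathrm{bc}$.

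First, I would build the block-cut tree~$\mathcal{T}_\mathrm{bc}$ of~$\mathcal{L}$ by a DFS in linear time and, for each block~$B$, build $\mathcal{T}_\mathrm{spqr}(B)$ in linear time with the algorithm of Gutwenger and Mutzel~\cite{spqr-tree-linear-time}. I would then root both decomposition trees and process their nodes bottom-up. For each candidate~$\mu$ with separator~$\{s,t\}$, I would check the three conditions of \cref{rule:transit-components} by merging constant-size aggregates from the children of~$\mu$: a flag recording whether $V(G_\mu)\setminus\{s,t\}$ contains a terminal; for every train~$z$ touching~$G_\mu$, pole-indexed counters of boundary crossings at~$s$ and at~$t$ so that the routing condition can be verified locally; and an acyclic orientation witness, namely a topological order of the skeleton of~$\mu$ that is compatible with the orientations inherited from the already-contracted children. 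If all three tests pass, I would contract $G_\mu$ in~$\mathcal{E}$ by replacing, for every traversing train, the events strictly between its $s$- and $t$-events by the direct arc; thereafter $\mu$ behaves as a single edge in the parent skeleton and does not need to be revisited.

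The main obstacle is the amortized linear time bound. The aggregates at~$\mu$ can be merged in time proportional to the size of the skeleton of~$\mu$ plus the number of train traversals crossing its poles. Summing skeleton sizes over all SPQR- and block-cut-tree nodes gives $O(|V(\mathcal{L})|+|E(\mathcal{L})|)$, since each real edge of~$\mathcal{L}$ appears in exactly one skeleton. For the train contribution, each arc of~$\mathcal{E}$ is charged to a node's counters only when it crosses a pole of that node; by charging such a crossing to the unique decomposition-tree edge along which the corresponding train-line arc moves between pertinent graphs, the total work sums to $O(|V(\mathcal{E})|+|E(\mathcal{E})|)$. Finally, every contraction destroys the events strictly internal to its component, so the cost of all contractions is also linear in~$|V(\mathcal{E})|$. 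The most delicate step is the boundary-routing condition, because a train line may re-enter the same component; the pole-indexed counters must be designed so that such re-entries can be detected and combined in $O(1)$ per decomposition-tree edge.
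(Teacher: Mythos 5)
Your overall strategy is the same as the paper's: build the block-cut tree and, per block, the SPQR-tree of Gutwenger--Mutzel, process the decomposition bottom-up while testing contractibility locally at each node, and bound the total work by the linear total size of the skeletons plus a charging argument for the train lines. Your treatment of re-entering trains via pole-indexed counters corresponds to the paper's explicit ``loops at a pole'' test, and the amortized accounting is essentially the one the paper uses.

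There is, however, a concrete gap in your enumeration of candidates. You assert that candidate transit components ``correspond to pertinent graphs of SPQR-nodes, together with subtrees of~$\mathcal{T}_\mathrm{bc}$,'' and you only ever contract the full pertinent graph $G_\mu$ when all tests at~$\mu$ pass. This misses two families of contractible transit components that the paper handles explicitly. First, inside the skeleton of an S-node with path $s=c_0,c_1,\dots,c_k=t$, any pair $\{c_i,c_j\}$ is a separating pair, so a \emph{maximal subpath} of contractible skeleton edges that avoids a terminal $c_m$ is a contractible transit component even when the whole path is not; it is the union of several sibling pertinent graphs and is not the pertinent graph of any single node, so your all-or-nothing test at~$\mu$ leaves it uncontracted. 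Second, the rooted SPQR-tree only exposes the ``lower'' side of each separating pair: if every skeleton edge is contractible except a single edge $e'$ in an R-node, the candidate is the \emph{complement} $\mathcal{L}\setminus C_{e'}$ of the split component of $e'$, which is never a pertinent graph; the paper adds a dedicated final check for exactly this case (and an analogous re-examination at the root when its child is an S-node). Without these, the rule is not applied exhaustively, so the algorithm as described does not prove the stated theorem; both omissions can be repaired within your framework at no asymptotic cost, but they must be addressed.
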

  \begin{proof}
    Let $\mathcal{L}$ be the connected location graph of an event graph $\mathcal{E}$. 
    If the location graph is not connected
    we can apply the algorithm on every connected component of $\mathcal{L}$.
    We consider the case where $\mathcal{L}$ is not biconnected as this case 
    also handles the biconnected subgraphs of $\mathcal{L}$ and can 
    therefore easily be adapted for the case where $\mathcal{L}$ is biconnected.
    We start by decomposing $\mathcal{L}$ into a block-cut tree~$\mathcal{T}_\mathrm{bc}$
    with vertex set $\mathcal{B} \cup \mathcal{C}$.
    For every block $b$ in $\mathcal{B}$ we do the following.
    We construct an SPQR-tree~$\mathcal{T}_{\mathrm{spqr}}^b$ of the biconnected
    component corresponding to the block $b$ and temporarily mark all cut 
    vertices of $\mathcal{C}$ contained in $b$ as terminal in $b$ 
    so that no cut vertex can be contracted while we process 
    $\mathcal{T}_{\mathrm{spqr}}^b$.
    Let $\mu_r$ be the root of $\mathcal{T}_{\mathrm{spqr}}^b$ 
    and let~$\mathcal{E}_\mu$ be the event
    graph that corresponds to the pertinent graph $\mathcal{L}_\mu$ for a
    node $\mu$ in $\mathcal{T}_{\mathrm{spqr}}^b$. 
    We say a train \emph{loops} at vertex $s$ for some split pair $\{s, t\}$, 
    if there is a train whose train line contains the subsequence 
    $\langle t, s, t\rangle$.
    Intuitively, we traverse $\mathcal{T}_{\mathrm{spqr}}^b$ bottom-up, and 
    mark nodes in~$\mathcal{T}_{\mathrm{spqr}}^b$ (and the corresponding edge in their parent) 
    as contractible or non-contractible and modify 
    $\mathcal{E}_\mu$ using \cref{rule:transit-components}.
    Due to the correspondence between vertices in a skeleton and vertices 
    in~$\mathcal{L}$ we use ``vertex in a skeleton'' and 
    ``vertex in~$\mathcal{L}$'' interchangeably.
    We do the following depending on the type of $\mu \neq \mu_r$. 
    \begin{description}
      \item[Q-node:]
      We mark the edge corresponding to $\mu$ in the skeleton of the parent 
      of $\mu$ as contractible (a contraction of the real edge corresponding
      to $\mu$ does not result in a different graph). If there is a train
      that loops at one of the poles of $\mu$, we mark this pole as terminal.

      \item[S-node:] Let $\langle c_1, \dots, c_{k-1}\rangle$ be the path 
      that corresponds to the skeleton of $\mu$ without the virtual edge of
      its parent. 
      Every maximal subpath of contractible edges that does not
      contain any terminal $c_i$ 
      is therefore a contractible transit component. Thus, we can
      apply \cref{rule:transit-components} on the graph that is induced by
      this subpath. 
      If the entire path $\langle c_1, \dots, c_{k-1}\rangle$ is contractible, 
      then we mark the edge corresponding to $\mu$ in the skeleton of $\mu$'s
      parent as contractible, otherwise we mark it as non-contractible.
      In the contractible case, we check if there is a train that
      loops at the poles of $\mu$. If this is the case, we mark
      the corresponding pole(s) as terminal.

      \item[P-node:] Let $e_1, \dots, e_k$ be the edges between the poles
      of $\mu$ without the virtual edge corresponding to the 
      parent of $\mu$. We consider every contractible edge among
      $e_1, \dots, e_k$ as a single transit component $C$ and apply
      \cref{rule:transit-components}. Note that $C$ is a contractible
      transit component since the union of parallel contractible 
      transit components is again a contractible transit component.
      If every edge $e_1, \dots, e_k$ is contractible, we mark the
      edge corresponding to $\mu$ in the skeleton of $\mu$'s parent
      as contractible. Further, we check if there is a train that
      loops at the poles of $\mu$. If this is the case, we mark
      the corresponding pole(s) as terminal.
      
      \item[R-node:] Let $C$ be the skeleton of $\mu$ without the
      virtual edge of its parent.
      If every edge in~$C$ is contractible, then we test if~$C$ is
      a contractible transit component with respect to the poles of~$\mu$.
      If this is the case, we apply \cref{rule:transit-components}
      on~$C$ and mark the edge corresponding to $\mu$ in the skeleton
      of $\mu$'s parent as contractible. Again, if there is a train
      that loops at one of the poles of $\mu$, we mark the
      corresponding pole(s) as terminal.
    \end{description}
    To process the root $\mu_r$ of $\mathcal{T}_{\mathrm{spqr}}^b$, 
    we do the following depending on the type of the single child 
    $\mu_c$ of $\mu_r$.
    If $\mu_c$ is an S-node, we check if the edge corresponding to $\mu_c$ 
    is marked as contractible.
    If this is the case, we mark $b$ as contractible.
    Otherwise, we consider the maximal subpath of the skeleton of
    $\mu_c$ that now contains the edge corresponding to $\mu_r$ and
    apply \cref{rule:transit-components}, if possible.
    If $\mu_c$ is a P- or R-node, we mark $b$ as contractible if 
    $\mu_c$ is also contractible.
    To complete the processing of $b$, we finally check whether 
    there is a train that loops at one of the poles of $\mu_r$, 
    if this is the case we mark $b$ as non-contractible.
    Additionally, if this pole is also a cut vertex, we mark it as terminal.
    It remains to test for one special case. If for every skeleton in
    $\mathcal{T}_{\mathrm{spqr}}^b$ every edge is marked as contractible, except for a 
    single edge $e'$ in an R-node, test if $\mathcal{L}\setminus C_{e'}$ is
    a contractible transit component where $C_{e'}$ is the split component
    of $e'$. If this is the case apply \cref{rule:transit-components}.
    
    After we have completed every block $b$ in $\mathcal{B}$, we mark every
    node $c$ in $\mathcal{C}$ as contractible if $c$ is not a terminal.
    Finally, we proceed similarly to the S-node previously.
    For every chain in~$\mathcal{T}_\mathrm{bc}$ that contains only contractible 
    vertices, we apply \cref{rule:transit-components}.

    A block-cut tree and an SPQR-tree can be computed in linear time each
    \cite{bc-trees,spqr-tree-linear-time}. It is easy to see that
    the total size of all skeletons is linear in the size of the
    location graph.
    Since each node is processed in time linear in
    the size of its skeleton and of the corresponding event graph, 
    the entire algorithm takes time linear in the sizes of~$\mathcal{L}$
    and~$\mathcal{E}$.
  \end{proof}

  \section{Exact Integer Linear Programming Approaches}\label{sec:ilps}  
  To state an ILP for \textsc{Turn Minimization}, we transform a given event 
  graph~$\mathcal{E}$ and its location graph~$\mathcal{L}$ into an equivalent 
  \textsc{Maximum Betweenness} instance $(S, R)$, where $S = V(\mathcal{L})$ and~$R$ 
  contains all triplets of locations of consecutive events in all train lines 
  of~$\mathcal{E}$. 
  However, we state the ILP in the context of \textsc{Turn Minimization} and minimize
  the number of violated constraints. We start with the intuitive integer linear
  program.

  We assume that the set of restrictions~$R$ is ordered arbitrarily,
  and we denote the $i$-th element in $R$ by $(p, q, r)_i$.
  Further, let $U = \{(p, q, r) \colon \{p, r\}\in[S]^2, 
  q\in S, q\neq p, q\neq r\}$.
  For each pair of elements $p, q \in S$ with $p\neq q$, 
  let $x_{pq} \in \{0, 1\}$ be a binary decision variable, 
  where $x_{pq}=1$ means that $p \prec q$.
  We require $x_{pq} = 1 - x_{qp}$ to ensure asymmetry. Further,
  we model the transitivity constraints of an ordering for 
  $(p, q, r) \in U$ using the constraint
  \begin{align*}
    x_{pr} \geq x_{pq} + x_{qr} - 1,
  \end{align*}
  i.e., the constraint ensures that if $p \prec q$ and $q \prec r$, 
  then $p \prec r$ must hold as well.
  
  It remains to count the number of restrictions $(p, q, r)_i \in R$ 
  that are violated. 
  For this purpose, we introduce a binary variable $b_i$ for each
  restriction $(p, q, r)_i \in R$. 
  The intended meaning of $b_i = 1$ is that restriction $i$ is violated. 
  Note that a restriction $(p, q, r)_i$ is similar to a transitivity
  constraint. If $q\prec p$ and $q \prec r$, or 
  $p \prec q$ and $r\prec q$, then $b_i = 1$. Thus, for each
  restriction $(p, q, r)_i$ the following two constraints
  force $b_i=1$ if the restriction is violated.
  \begin{align*}
    b_{i} &\geq x_{qp} + x_{qr} - 1\\
    b_{i} &\geq x_{pq} + x_{rq} - 1.
  \end{align*}
  Thus, we obtain the following formulation (\refilp{ilp-naive}):
  \begin{mini!}
    {}{\sum_{(p, q, r)_i \in R}b_i\label{constr:obj}}{\label{ilp-naive}}{}
    \addConstraint{x_{pq}}{=1-x_{qp}\label{constr:asymmetry}}{\qquad\forall\, \{p, q\}\in [S]^2}
    \addConstraint{x_{pr}}{\geq x_{pq} + x_{qr} - 1\label{constr:transitivity}}{\qquad \forall\,(p, q, r) \in U}
    \addConstraint{b_{i}}{\geq x_{qp} + x_{qr} - 1\label{constr:turn-count1}}{\qquad \forall\, (p, q, r)_i \in R}
    \addConstraint{b_{i}}{\geq x_{pq} + x_{rq} - 1\label{constr:turn-count2}}{\qquad \forall\, (p, q, r)_i \in R}
    \addConstraint{x_{pq}}{\in\{0, 1\}}{\qquad\forall\, (p, q)\in S^2, p\neq q} 
    \addConstraint{b_i}{\in\{0, 1\}\label{constr:last-constr}}{\qquad\forall\, (p, q, r)_i \in R}
  \end{mini!}

  \subparagraph*{Cutting plane approach.}
  As a first improvement over \refilp{ilp-naive},  we test a cutting plane approach.
  We start by solving a relaxation of \refilp{ilp-naive} that omits
  only the transitivity constraints~\eqref{constr:transitivity}. 
  To solve the separation problem, i.e., to find violated 
  constraints~\eqref{constr:transitivity}, we search for up to $k$ cycles in the
  auxiliary graph $G'$ that contains a vertex for each location and has a 
  directed edge $(p, q)$ if and only if $x_{pq}=1$.
  A cycle in $G'$ then corresponds to a violated transitivity constraint. 
  Note that for each pair of vertices $p \neq q$ in $G'$, there is either
  an edge $(p, q)$ or an edge $(q, p)$, due to constraints~\eqref{constr:asymmetry}.
  Therefore, $G'$ is a tournament graph. 
  It is well known \cite{tournaments} that if there is a cycle in a 
  tournament graph $G'$, then there is also a cycle of length 3 in $G'$.
  Thus, it suffices to search for cycles of length $3$.

  \subparagraph*{An integer linear program via tree decompositions.}
  We now propose a formulation that reduces the number of transitivity
  constraints by exploiting the structure of the location graph.
  In particular, given a tree decomposition
  $\mathcal{T} = (T, \{X_t\}_{t \in V(T)})$ of~$\mathcal{L}$, we modify
  \refilp{ilp-naive} by using 
  $U_t = \{(p, q, r) \colon \{p, r\}\in[X_t]^2,  q\in X_t, 
  q\neq p, q\neq r\}$ instead of~$U$. Thus, we obtain the following 
  formulation (\refilp{ilp-tw}):
  \begin{mini!}
    {}{\sum_{(p, q, r)_i \in R}b_i}{\label{ilp-tw}}{}
    \addConstraint{x_{pq}}{=1-x_{qp}}{\qquad\forall\,(p, q) \in \bigcup_{t\in V(T)} [X_t]^2}
    \addConstraint{x_{pr}}{\geq x_{pq} + x_{qr} - 1}{\qquad\forall\,(p, q, r)\in\bigcup_{t\in V(T)} U_t}
    \addConstraint{b_{i}}{\geq x_{qp} + x_{qr} - 1}{\qquad\forall\,(p, q, r)_i \in R}
    \addConstraint{b_{i}}{\geq x_{pq} + x_{rq} - 1}{\qquad\forall\,(p, q, r)_i \in R}
    \addConstraint{x_{pq}}{\in\{0, 1\}\label{constr:tw-xvar}}{\qquad\forall\,t\in V(T)\,\forall (p, q) \in X_t^2, p \neq q}
    \addConstraint{b_i}{\in\{0, 1\}}{\qquad\forall\, (p, q, r)_i \in R}
  \end{mini!}
  In other words, instead of introducing transitivity constraints for every
  triplet of locations in~$U$, we restrict the transitivity constraints to
  triplets of locations that appear together in at least one bag.
  The intuition behind this formulation is the following.
  Consider a separation $(A, B)$ in $\mathcal{L}$ with a separator $A\cap B$. 
  It is possible to find a turn-optimal ordering of $\mathcal{L}$ by
  separately finding turn-optimal orderings of $\mathcal{L}[A]$ and
  $\mathcal{L}[B]$ with the property that the ordering of $A\cap B$ is
  consistent in both orderings.
  In particular, programs for $\mathcal{L}[A]$ and $\mathcal{L}[B]$
  need to share only constraints for vertices in $A\cap B$.
  Note that this intuition works only if we apply one separator to the
  location graph.  For example, if we want to solve $\mathcal{L}[A]$
  and $\mathcal{L}[B]$ recursively, we need to
  separate $\mathcal{L}$ throughout the recursion, as vertices can be 
  part of multiple separators across different recursion steps. 
  In this case, orderings of separators
  that are consistent for specific separations might be in conflict with
  each other. 
  We show that a conflict between multiple separations cannot happen
  if we use a tree decomposition for the separation of $\mathcal{L}$.

  \begin{theorem}
    Let $\mathcal{E}$ be an event graph.  If $x$ is an optimal
    solution to \refilp{ilp-tw}, then $x$ implies a turn-minimal
    ordering of the locations in $\mathcal{E}$.
  \end{theorem}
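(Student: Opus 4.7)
The plan is to extract a total order $\prec$ of the locations from the optimal $x$ to \refilp{ilp-tw} and to show that the number of turns under $\prec$ is at most the ILP objective value, which in turn is at most the minimum possible number of turns. To this end, define a directed graph $D$ on $V(\mathcal{L})$ whose arcs are the pairs $(p,q)$ with $x_{pq}=1$ for some bag $X_t$ containing both $p$ and $q$. The central claim, and the main obstacle of the proof, is that $D$ is acyclic; once this is established, any topological extension of $D$ yields a total order $\prec$ that is consistent with every $x$-value that appears in any bag.

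I would attack acyclicity via two classical facts about tree decompositions: the \emph{Helly property} for subtrees of a tree (if the subtrees of three locations pairwise intersect, they share a common node of $T$, so a single bag contains all three locations), and the fact that the bag-intersection graph (two locations adjacent iff they share a bag) is \emph{chordal}, being the intersection graph of subtrees of a tree. Consider a shortest directed cycle $p_1 \to p_2 \to \dots \to p_k \to p_1$ in $D$. If $k=3$, the Helly property places $p_1,p_2,p_3$ in a single bag $X_t$, and the transitivity constraint of \refilp{ilp-tw} for this triple is violated by $x_{p_1p_2}=x_{p_2p_3}=x_{p_3p_1}=1$, a contradiction. If $k\geq 4$, chordality produces a chord $\{p_i,p_j\}$ between non-consecutive cycle vertices; since $p_i,p_j$ share a bag, exactly one of $x_{p_ip_j}$ and $x_{p_jp_i}$ equals $1$ by the asymmetry constraint, and adding the corresponding arc to $D$ yields a strictly shorter directed cycle, contradicting minimality.

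Next, I would bound the turns of $\prec$ by $\sum_i b_i$. For each restriction $(p,q,r)_i\in R$, the pairs $\{p,q\}$ and $\{q,r\}$ are edges of $\mathcal{L}$ by the construction of $R$ from consecutive events on train lines; hence by property~(T\ref{prop:tree-decomp-t2}) each of these pairs appears in some bag, so $x_{pq},x_{qp},x_{qr},x_{rq}$ are all defined and agree with $\prec$. The constraints $b_i\geq x_{qp}+x_{qr}-1$ and $b_i\geq x_{pq}+x_{rq}-1$ then force $b_i=1$ whenever $\prec$ violates restriction $i$, so $\prec$ has at most $\sum_i b_i$ turns.

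Finally, any feasible solution of \refilp{ilp-naive} induces, by restricting its variables to pairs that co-occur in some bag, a feasible solution of \refilp{ilp-tw} with the same objective value (all dropped transitivity constraints are a subset of those in \refilp{ilp-naive}, and the $b_i$ variables and their defining constraints are unchanged). Hence the optimum of \refilp{ilp-tw} is at most that of \refilp{ilp-naive}, which equals the minimum number of turns over all location orderings. Combined with the previous step, $\prec$ attains this minimum and is therefore turn-minimal.
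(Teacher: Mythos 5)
Your proposal is correct and follows essentially the same route as the paper: both extract the orientation digraph from the $x$-variables, take a shortest directed cycle, dispose of length-3 cycles via a bag containing all three vertices (your Helly property is exactly the paper's clique-containment property (P2)), and dispose of longer cycles via the observation that two non-consecutive cycle vertices sharing a bag would force an arc yielding a shorter cycle. The only cosmetic differences are that you close the long-cycle case by invoking chordality of the subtree intersection graph to produce the chord directly, whereas the paper concludes that every bag meets the cycle in at most two vertices and then appeals to a cycle having treewidth at least~2, and that you spell out the final comparison with the optimum of \refilp{ilp-naive}, which the paper leaves implicit.
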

  \begin{proof}
    First, observe that if the variables of type~$x_{pq}$ indeed form
    a valid total ordering, then every possible turn is counted correctly 
    by the variables~$b_{i}$. Specifically, for every triplet $(p, q, r) \in R$,
    the variables $x_{pq}$ ($x_{qp}$) and $x_{qr}$ ($x_{rq}$) exist since
    every triplet in~$R$ forms a path of length~2 in~$\mathcal{L}$, 
    and every edge in $\mathcal{L}$ is contained in at least one bag
    due to (T\ref{prop:tree-decomp-t2}). 
    Thus, it remains to show that the variables of type~$x_{pq}$ model
    a valid total ordering. 

    Consider the directed graph~$G'$ whose vertices correspond to the
    vertices in~$\mathcal{L}$ and that has a directed edge $(p, q)$ if
    and only if~$x_{pq} = 1$.
    The underlying undirected graph of~$G'$ is a supergraph 
    of~$\mathcal{L}$, and the tree decomposition~$\mathcal{T}$ of~$\mathcal{L}$
    is also a valid tree decomposition of~$G'$. 
    Note that if~$G'$ is acyclic, then the variables
    of type~$x_{pq}$ model a valid ordering.
    Towards a proof by contradiction, suppose that this is not the case 
    and that~$G'$ does contain a cycle. Let~$C$ be a shortest cycle in~$G'$.
    If $C$ has length~3, then $C$ is a clique and due
    to~(P\ref{prop:tree-decomp-clique}), the tree~$T$ has a node $t$ such that
    $C \subseteq X_t$.  Due to constraints~(\ref{constr:transitivity}),
    however, for every bag~$X_i$ of~$\mathcal{T}$, the graph~$G'[X_i]$
    is acyclic.
    Hence, $C$ is a cycle of length at least~4.

    We claim that every bag of~$\mathcal{T}$ contains at most two
    vertices of~$C$. Otherwise, there would be two vertices~$u$
    and~$w$ of~$C$ that are not consecutive along~$C$ but, due to
    constraints~\eqref{constr:tw-xvar}, the graph~$G'$ would contain
    the edge~$(u,w)$ or the edge~$(w,u)$.  In the first case, the path
    from~$w$ to~$u$ along~$C$ plus the edge~$(u,w)$ would yield a
    directed cycle.  In the second case, the path from~$u$ to~$w$
    along~$C$ plus the edge~$(w,u)$ would also yield a directed cycle.
    In both cases, the resulting cycle would be shorter than~$C$
    (since~$u$ and~$w$ are not consecutive along~$C$).  This yields
    the desired contradiction and shows our claim.
    
    Note that~$\mathcal{T}$ restricted to~$C$ is also a tree decomposition. 
    However, since every bag contains at most two vertices of~$C$, 
    the restricted tree decomposition would have width~1, which is a 
    contradiction since every tree decomposition of a cycle has width at 
    least~2. Thus, the directed cycle~$C$ does not exist, and~$G'$ is acyclic.
  \end{proof}
  Note that this formulation requires only
  $\mathcal{O}(\operatorname{tw}(\mathcal{L})^2\cdot
  |V(\mathcal{L})|)$ variables and
  $\mathcal{O}(\operatorname{tw}(\mathcal{L})^3\cdot
  |V(\mathcal{L})|)$ many constraints.  This is a significant
  improvement over \refilp{ilp-naive} if
  $\operatorname{tw}(\mathcal{L})$ is small, which can be expected
  since train infrastructure is usually sparse and ``tree-like''.

  \section{Experimental Analysis}\label{sec:experiments}
  We tested the effectiveness of the reduction rule and the runtime of
  our formulations on an anonymized and perturbed dataset with 19 instances 
  provided by DB InfraGO AG; see \cref{fig:overview-dataset} for an overview
  of the dataset.
  Our computations show that the minimum number of turns is between 0 and 5
  in the provided dataset.
  We implemented our algorithms in the programming language \texttt{Python}.
  We used Networkx \cite{networkx} to handle most of the graph operations
  and Gurobi (version 12.0.1) \cite{gurobi} to solve the integer linear programs. 
  All experiments were conducted on a laptop running Fedora
  40 with Kernel 6.10.6 using an \mbox{Intel-7-8850U} CPU with
  four physical cores and 16\,GB RAM.

  \begin{figure}[h!]
    \centering
    \includegraphics[scale=0.9]{instance_size_to_loc_size.pdf}
    \caption{Left: Instances with respect to the number of events and
      the number of locations. Right: The histogram depicts the
      frequency of instances (y-axis) with a given number of trains
      (x-axis) in the dataset; e.g., there are five instances with
      five trains.}
    \label{fig:overview-dataset}
  \end{figure}

  \subparagraph{Effectiveness of the reduction rule.}
  Our implementation of \cref{rule:transit-components} is restricted to
  exhaustively contract chains. But even with this restriction, the
  reduction rule proves to be effective on the provided dataset.
  On average, the number of locations was reduced by 75\%, where the
  best result was a reduction by 89\% (instance 50-4) and the worst
  result was a reduction by 55\% (instance 20-3). A full evaluation
  is shown in \cref{fig:reduction-experimental}.
  \begin{figure}
    \centering
    \includegraphics{reduction_sizes.pdf}
    \caption{Results of the effectiveness of  
    applying contractions, restricted to contracting chains.
    The bar diagram shows the number of
    locations in each instance before and after contraction.}
    \label{fig:reduction-experimental}
  \end{figure}

  \subparagraph{Runtime of the ILPs.} 
  Each formulation was implemented with only one variable for each
  unordered pair of locations and without constraints~\eqref{constr:asymmetry}.
  Instead, if we created variable $x_{pq}$ for the unordered pair $\{p, q\}$ 
  and $x_{qp}$ was needed in the formulation, 
  we substituted $1-x_{pq}$ for $x_{qp}$.
  Since the computation of an optimal tree decomposition is
  \NP-hard, we used the \emph{Min-Degree Heuristic} implemented in
  Networkx to compute a tree decomposition for \refilp{ilp-tw}.
  The additional time needed for computing this tree decomposition was
  counted towards the runtime of \refilp{ilp-tw}.
  We tested the cutting plane method and \refilp{ilp-tw} on 
  the original instances which we call \emph{initial instances}
  and the instances that were reduced by our
  implementation of \cref{rule:transit-components}
  which we call \emph{reduced instances}. 
  We imposed a time limit of one hour on every experiment. 
  The result of each experiment is the average
  value over 5 repetitions of the experiment.
  The cutting plane approach on the initial instances was able
  to solve 11 out of 19 instances. An instance was either always 
  solved to optimality or never solved to optimality across all repetitions
  of the experiment.  The largest instance
  that was solved to optimality contained 465 locations (and 19 trains) and was
  solved in 3509\,s. The smallest instance that was not solved
  within the time limit contained 277 locations (and 8 trains).
  In contrast, the cutting plane approach was able to solve
  every reduced instance within 70\,s.
  Applied to the reduced instances, the cutting plane approach was 625
  times faster than when applied to the initial instances (averaged
  over those that were solved within the time limit).
  See \cref{fig:runtime-ilps} for more details on
  the performance of the cutting plane approach.
  \begin{figure}  
    \centering
    \includegraphics[scale=0.9]{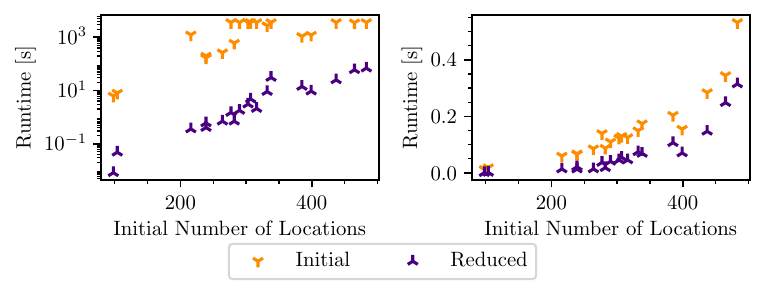}
    \caption{Runtimes of the cutting plane approach (left) and of
      \refilp{ilp-tw} (right), on the initial and the reduced
      instances.}
    \label{fig:runtime-ilps}
  \end{figure}
  \refilp{ilp-tw} solved every instance (initial or reduced) in less
  than a second (see \cref{fig:runtime-ilps} for details).  Still,
  the reduction helped, making \refilp{ilp-tw} on average 3.98 times
  faster than on the initial instances.  On the reduced instances,
  \refilp{ilp-tw} was on average 103 times faster than the cutting
  plane approach on the reduced instances (and 22,845 times faster
  than the cutting plane approach on the initial instances).
  %
  %

  \section{Conclusion and Future Work}\label{sec:conclusion}
  In this paper we have considered the problem of visualizing general
  event graphs as time-space diagrams. We established a connection between
  minimizing the number of turns in a time-space diagram and 
  \textsc{Maximum Betweenness}, we proposed a preprocessing method to
  reduce the size of event graphs, and proposed two different integer
  linear program formulations.
  
  We evaluated the performance of our algorithms on a real-world data set
  and observed that our best algorithms were able to solve every instance
  within one second. This suggests that turn-optimal time-space diagrams
  can be used in a real-time environment in practice. This can facilitate
  dispatching and the development of dispatching software.

  Future work includes evaluating the usefulness and comprehensibility of 
  the generated drawings in a real-life scenario, as well as exploring
  alternative optimization criteria that may yield improved visualizations.
  Secondary optimization steps that are applied as a post-processing 
  to a turn-optimal drawing such as minimizing the number of 
  crossings while keeping the ordering of locations might produce even
  better drawings.
  Since event graph visualizations represent time schedules, changes in the
  underlying schedule can lead to significant shifts in layout.
  An important direction for future work is developing techniques to 
  preserve the user's mental map during such updates.


  \bibliography{references}

  \clearpage
  \appendix
  \section{Complexity Results for the Turn-Minimization Problem}
  \label{sec:tstm-complexity}

  \textsc{Betweenness} is the decision version of \textsc{Maximum Betweenness}
  which asks whether an ordering exists that satisfies \emph{all} restrictions. 
  \textsc{Betweenness} was shown to be \NP-hard by Opatrny~\cite{opatrny-betweenness}.
  To prove NP-hardness of \textsc{Turn Minimization} we provide a simple reduction from 
  \textsc{Betweenness} which can be used to derive that \textsc{Turn Minimization} is also hard 
  to approximate with constant (multiplicative or additive) factor and 
  \paraNP-hard for the natural parameter, the number of turns.

  To prove the second part of the theorem, ie., that \textsc{Turn Minimization} is
  \paraNP-hard if parameterized by the vertex cover number of the
  location graph, we use a reduction from \textsc{MaxCut}.
  The decision version of \textsc{MaxCut} asks whether there is a cut of size at
  least $k$ in a given graph.
  
  \problemcomplexity*\label{thm:problem-complexity*}
  \begin{proof}
    Let $(S, R)$ be an instance of the \textsc{Betweenness} problem.
    We construct an instance for \textsc{Turn Minimization} corresponding to $(S, R)$.
    Let $S$ be the set of locations of the event graph that we want to construct. 
    In order to construct the event graph $\mathcal{E}$, we consider $R$ in an 
    arbitrary but fixed order, where we let the $i$-th element in the order be 
    denoted by $(a_i, b_i, c_i)$.
    For each triplet $(a_i, b_i, c_i) \in R$, we construct a path $\Xi_i$ in 
    $\mathcal{E}$ which is composed of three events 
    $\Xi_i = \langle v_i^1, v_i^2, v_i^3\rangle$ such that
    $\locationOf{v_i^1} = a_i$, $\locationOf{v_i^2}=b_i$, $\locationOf{v_i^3}=c_i$
    and $\timeOf{v_i^1} < \timeOf{v_i^2} < \timeOf{v_i^3}$.
    This path $\Xi_i$ can be considered as a train moving from location
    $a_i$ to $c_i$ over $b_i$.
    
    \begin{claim}
      Using this transformation, $(S, R)$ has a valid ordering $\prec$, 
      satisfying all restrictions in $R$ if and only if the transformed 
      instance $\mathcal{E}$ can be drawn as a time-space diagram $\Gamma$
      without any turns.
    \end{claim}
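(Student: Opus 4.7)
The plan is to observe that, under the described transformation, the level ordering in a time-space diagram of~$\mathcal{E}$ corresponds bijectively to a total order~$\prec$ on~$S$, and that a turn at the middle event~$v_i^2$ of the train line $\Xi_i$ occurs precisely when the restriction $(a_i, b_i, c_i)$ is violated. So the claim reduces to matching the definitions.

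First I would make the correspondence explicit. Given a time-space diagram induced by a bijection $y\colon S\to\{1,\dots,|S|\}$ (recall $S=\locationOf{V(\mathcal{E})}$ by construction), define the total order $\prec$ on $S$ by $p\prec q$ iff $y(p)<y(q)$. Conversely, any total order $\prec$ on $S$ yields such a bijection~$y$ and hence a time-space diagram.

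Next I would handle both directions in parallel by a single structural observation. For the triplet $(a_i,b_i,c_i)\in R$, the train line $\Xi_i=\langle v_i^1, v_i^2, v_i^3\rangle$ has pairwise distinct locations $a_i,b_i,c_i$ (this may be assumed without loss of generality, or $R$ may be cleaned of degenerate triplets). By definition of a turn, there is no turn at $v_i^2$ iff $y(b_i)$ lies strictly between $y(a_i)$ and $y(c_i)$, i.e., iff $a_i\prec b_i\prec c_i$ or $c_i\prec b_i\prec a_i$, which is exactly the condition for $\prec$ to satisfy the restriction $(a_i,b_i,c_i)$. Since $\Xi_i$ has only the single internal event~$v_i^2$, turns in the drawing of~$\mathcal{E}$ occur only at such middle events, and each restriction contributes independently.

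Combining these observations gives the claim: $\Gamma$ is turn-free iff no middle event $v_i^2$ causes a turn iff every restriction in $R$ is satisfied by the associated order~$\prec$. There is no real obstacle here; the only point requiring mild care is to note that the construction uses one train per restriction, so turns and violated restrictions are in one-to-one correspondence (not merely implied in one direction), which is what makes this reduction strong enough to transfer not just \NP-hardness but also the inapproximability and \paraNP-hardness claims stated in \cref{thm:problem-complexity}.
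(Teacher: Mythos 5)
Your proof is correct and follows essentially the same route as the paper's: both directions rest on the observation that a turn at the middle event of $\Xi_i$ occurs exactly when $y(b_i)$ fails to lie strictly between $y(a_i)$ and $y(c_i)$, i.e., exactly when the restriction $(a_i,b_i,c_i)$ is violated. Your phrasing as a single per-restriction biconditional (plus the explicit remark that degenerate triplets must be excluded for the turn definition to apply) is a slightly tidier packaging of the paper's two-direction argument, but not a different proof.
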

    \begin{claimproof}
      Let $(S, R)$ be a \textsc{Betweenness} instance with a valid ordering
      $\prec$. 
      If arranging the locations of $\mathcal{E}$ on levels from top to bottom
      according to $\prec$, for each triplet $(a_i, b_i, c_i) \in R$ it holds
      that either $a_i \prec b_i \prec c_i$ or $c_i \prec b_i \prec a_i$.
      By construction each $\Xi_i$ corresponding to a constraint 
      $(a_i, b_i, c_i) \in R$ is a path of events with three consecutive locations
      $a_i$, $b_i$, $c_i$, thus $\Xi_i$ is either monotonically increasing or
      decreasing in $\Gamma$. Therefore, no turn occurs. 
      
      Conversely, assume there is a turn-free drawing $\Gamma$ of the transformed
      instance $(\mathcal{E}, Y)$. Let~$\prec$ be the ordering of $S$ implied 
      by the mapping $y$ of $\Gamma$.

      Now, assume that $\prec$ is violates a constraint in $(a_i, b_i, c_i) \in R$.
      Thus, $b_i$ is not between $a_i$ and $c_i$ in $\prec$. 
      Then, in the corresponding path $\Xi_i$ the location $b_i$ is also not between 
      $a_i$ and $c_i$ in the mapping $y$. 
      Therefore, $y(b_i)$ is the smallest/largest level of the three levels
      $y(a_i)$, $y(b_i)$, $y(c_i)$, implying a turn in $\Gamma$; a contradiction.
    \end{claimproof}
    
    In the reduction, we have shown that we can decide \textsc{Betweenness} by 
    testing whether there are no turns in the instance for 
    \textsc{Turn Minimization} obtained in the transformation. 
    This immediately eliminates the possibility of a parameterized algorithm
    (unless $\Polytime =\NP$) whose only parameter is the number of turns 
    as we would then be able to decide \textsc{Betweenness} in polynomial time.
    Thus, we have shown \eqref{para-np-turns}.

    Approximation algorithms with a multiplicative or constant additive
    factor are also impossible (unless $\Polytime = \NP$)
    for similar reasons. 
    A multiplicative $\alpha$-approximation algorithm is ruled out by the fact
    that such an approximation algorithm would have to solve the cases with 
    $0 = 0 \cdot \alpha$ turns optimally.

    If there was an additive $\beta$-approximation algorithm for a 
    $\beta \in \operatorname{poly}(|V(\mathcal{E})|)$, we could copy each
    gadget $\beta+1$ times. 
    By duplicating the gadgets $\beta+1$ times, a turn in one gadget causes 
    all other copies of the gadget to have a turn as well. 
    Thus, the number of turns is divisible by $\beta+1$. 
    If there was an optimal mapping from locations to levels without turns, the
    additive algorithm would also have to return the optimal value, 
    since the only number in the range $\{0, \beta\}$ that is divisible by 
    $\beta +1 $ is 0, implying that we could again use this algorithm to
    decide \textsc{Betweenness}. Thus, we have shown \eqref{inapprox}.

    In order to show \eqref{para-np-vc}, we carry out a simple reduction from 
    \textsc{MaxCut}.
    Let $(G, k)$ be an instance to \textsc{MaxCut}. To transform $(G, k)$ into an instance of \textsc{Turn Minimization} we construct the following event
    graph $\mathcal{E}$.
    Let $z$ be an auxiliary vertex and let $V(G) \cup \{z\}$ be the locations
    in~$\mathcal{E}$. For each $\{u, v\} \in E(G)$, we add a train line 
    corresponding to locations $(u, z, v)$ into $\mathcal{E}$.

    \begin{claim}
      Using the transformation as described above, $(G, k)$ has a cut of size
      $k$ if and only if the transformed instance $\mathcal{E}$ can be drawn
      as a time-space diagram with $n-k$ turns.
    \end{claim}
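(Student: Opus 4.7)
The plan is to exhibit a direct correspondence between edges of $G$ that cross a given cut and train lines of $\mathcal{E}$ that do \emph{not} contribute a turn. Write $n = |E(G)|$, so that $\mathcal{E}$ contains exactly $n$ train lines, one per edge of~$G$. The structural observation driving the proof is the following: for a train line with consecutive locations $(u, z, v)$ and any level assignment $y$, a turn occurs at $z$ if and only if $y(z)$ is not strictly between $y(u)$ and $y(v)$; equivalently, if and only if $u$ and $v$ lie on the same side of~$z$ in the order induced by~$y$.

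For the forward direction, given a cut $(A, B)$ of $G$ with $k$ crossing edges, I would construct a level assignment that places every vertex of $A$ above $z$ and every vertex of $B$ below $z$; the internal orders of~$A$ and~$B$ are irrelevant. By the observation, the train line $(u, z, v)$ avoids a turn if and only if its endpoints lie on opposite sides of $z$, i.e., if and only if $\{u, v\}$ crosses the cut. Hence exactly $k$ train lines are turn-free and the drawing has $n - k$ turns.

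For the converse, starting from a drawing with $n - k$ turns, I would define $A = \{v \in V(G) : y(v) > y(z)\}$ and $B = V(G) \setminus A$. This is a bipartition of $V(G)$ since $z$ is an auxiliary vertex not in~$V(G)$. Applying the structural observation in reverse, the number of turn-free train lines equals the number of edges of~$G$ that cross $(A, B)$, so the cut has size $n - (n - k) = k$.

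Both directions are essentially counting arguments once the structural observation is in place, so I do not expect a serious obstacle. As a bonus, the location graph of~$\mathcal{E}$ is the star with center $z$ and leaves $V(G)$, which has vertex cover number~$1$. Thus the reduction transfers \NP-hardness of \textsc{MaxCut} to instances of \textsc{Turn Minimization} whose location graph has vertex cover number~$1$, establishing~\eqref{para-np-vc} of \cref{thm:problem-complexity}.
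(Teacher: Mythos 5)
Your proof is correct and follows essentially the same route as the paper: both establish the bijection between cut-crossing edges and turn-free train lines (every train line has $z$ as its middle location, so it turns iff its endpoints lie on the same side of $z$), then count. Your write-up is slightly more explicit than the paper's in isolating this as a standalone structural observation and in spelling out the converse direction, which the paper dispatches with ``analogously,'' but there is no substantive difference.
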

    \begin{claimproof}
      Let $(G, k)$ be a \textsc{MaxCut} instance and let $(S, T)$ be a cut
      of size $k$ in $G$ with $C = \{\{s, t\}\colon s \in S, t\in T\}$. 
      We construct an ordering of the locations of $\mathcal{E}$ in the 
      following way.
      Every vertex in $S$ is placed below $z$ and every vertex in $T$ is placed
      above $z$ in an arbitrary order.
      Since $\mathcal{E}$ contains only train lines of the form $(u, z, v)$, for
      each $\{u, v\} \in E(G)$, this corresponds to a drawing of $\mathcal{E}$,
      where every train line corresponding to an edge in $C$ is drawn without a
      turn, and every train line whose edge is either contained in $S$ or in $T$
      is drawn with a turn. Since $(S, T)$ is a cut of size $k$, this drawing has
      $n-k$ turns.

      Conversely, let $\mathcal{E}$ be the transformed instance, let $\Gamma$
      be a drawing of $\mathcal{E}$ with $n-k$ turns and let $\prec$ be the ordering
      of locations of $\mathcal{E}$ implied by $\Gamma$.
      We set $S = \{s \in V(\mathcal{L}(\mathcal{E}))\colon s \prec z\}$ and
      $T = \{t \in V(\mathcal{L}(\mathcal{E}))\colon z \prec t\}$. The size of the
      resulting cut $(S, T)$ is $k$, which can be shown analogously to the previous 
      argument.
    \end{claimproof}
    Note that this transformation results in a location graph of $\mathcal{E}$
    that is a star graph with $z$ in the center. Since a star graph has a 
    vertex cover number of 1, 
    there is no algorithm parameterized by the vertex cover number, unless
    $\Polytime = \NP$.
  \end{proof}

  In order to show \cref{thm:fpt-tw}, we use a specific type of tree decomposition.
  We call a tree decomposition $\mathcal{T} = (T, \{X_t\}_{t \in V(T)})$ \emph{nice},
  if $T$ is rooted at a leaf node $r$, the leaf nodes in $T$ have empty bags, and all 
  other nodes are one of the three following different types.
  A node $t$ is of type \emph{introduce} if $t$ has exactly one child $c$, 
  and $X_t = X_c \cup \{v\}$ for some $v \notin X_c$. 
  Similarly, a node $t$ is of type \emph{forget}, if $t$ has exactly one child $c$,
  and $X_c = X_t \cup \{v\}$ for some $v \notin X_t$. 
  The third type is a \emph{join} node, which is a node $t$ with two children 
  $i, j \in V(T)$ whose bags contain the same vertices of $V$, i.e.,
  $X_t = X_i = X_j$. Further, we require that the root node $r$ is of type forget and
  that every leaf node in $T$ is associated with an empty bag.
  Given an arbitrary tree decomposition, a nice tree decomposition of the same graph
  can be computed in polynomial time preserving
  the width of the given decomposition such that this nice tree decomposition contains 
  $\mathcal{O}(\operatorname{tw}(G)\cdot n)$ many nodes \cite{nice-tree-decomp}.

  \paramalgo*\label{thm:fpt-tw*}
  \begin{proof}
    We begin with introducing notation. In the following we refer to  the time-space 
    diagram simply as ``drawing'' and for the sake of brevity 
    we say ``a drawing of location graph~$\mathcal{L}$'' where we 
    mean the drawing of $\mathcal{E}$ restricted to the locations contained in 
    $\mathcal{L}$.
    Given a strict total order $\prec$ on a finite set $S$, 
    the $\operatorname{rank}(b)$ of an element $b \in S$ is the position in the unique 
    enumeration of $S$ such that for each pair $a \prec b$, $a$ is enumerated before $b$. 
    Thus, $\operatorname{rank}(b) = |\{a \in S \mid  a \prec b\}| + 1$.
    
    Let $\mathcal{T} = (T, \{X_t\}_{t \in V(T)})$ be a nice tree decomposition of 
    $\mathcal{L}'$ rooted at some $r \in V(T)$.
    For some $t \in V(T)$, we define the subgraph $\mathcal{L}'_t$ of 
    $\mathcal{L}'$ to be the graph induced by the union of bags contained
    in the subtree of $T$ rooted at $t$. 
    For instance, the induced graph $\mathcal{L}'_r$ with respect to the 
    subtree rooted at the root node $r$ is precisely $\mathcal{L}'$.

    Further, let $\pi^t$ be an order of the vertices in the bag $X_t$. 
    We say that a drawing \emph{respects}~$\pi^t$ if the
    vertices in $X_t$ are drawn such that for all $p, q \in X_t$ with 
    $p \prec_{\pi_t} q$ vertex $p$ is drawn above vertex $q$ 
    (i.e., is assigned a higher level).
    With $\pi^t_{p \rightarrow i}$ we denote the order $\pi^t$ which is
    extended by a vertex $p$ such that $p$ has $\operatorname{rank}(p) = i$
    within the extended order $\pi^t_{p \rightarrow i}$ and all other
    vertices in the order that previously had a rank of at least $i$ now
    have their rank increased by 1.
    Additionally, let $\pi^t$ be an order of the vertices in the bag $X_t$ and
    let $c$ be a child of $t$ with $I = X_t \cap X_c$.
    We write $\pi^t|_{c}$ for the order $\pi^t$ restricted to the vertices in $I$.
    Lastly, we define $b(p, \pi^t)$ to be the number of turns for which 
    $p$ is one of the three locations $(p_{i-1}, p_i, p_{i+1})$ of a turn in a
    drawing of $\mathcal{L}'[X_t]$ respecting $\pi^t$. 
    Similarly, we write $b(X_t, \pi^t)$ for the total number of turns
    occurring in a drawing of $\mathcal{L}'[X_t]$ respecting the order
    $\pi^t$, where all three locations $(p_{i-1}, p_i, p_{i+1})$ of a turn
    are contained in $X_t$.
    Note that each drawing of $\mathcal{L}'[X_t]$ respecting $\pi^t$ has
    the same number of turns since $\pi^t$ dictates an ordering on every 
    vertex in $X_t$.

    Now, let $\mathcal{L}'$ be a given augmented location graph and let
    $\mathcal{T} = (T, \{X_t\}_{t \in V(T)})$ be a nice tree decomposition
    of $\mathcal{L}'$ rooted at a leaf $r \in V(T)$.
    We define $D[t, \pi^t]$ to be the number of turns in a turn-optimal
    drawing of $\mathcal{L}'_t$ respecting the order~$\pi^t$.
    Therefore, $D[r, \pi^r]$ corresponds
    to the number of turns of a turn-optimal drawing of $\mathcal{E}$,
    since $r$ is the root of $T$ and $r$ is associated with a leaf bag
    $X_r = \varnothing$. 

    We show how $D[t, \pi^t]$ can be calculated by the following recursive
    formulas depending on the node type of $t$. Based on this recursive
    formulation, the actual optimal ordering of the locations in 
    $\mathcal{E}$ can be extracted via a straightforward backtracking
    algorithm.

    \begin{description}
      \item[\textbf{Leaf node (except root):}] Since the associated bag
      $X_t$ of a leaf node $t$ is empty, 
      $\mathcal{L}'_t$ is an empty graph and therefore the minimum number
      of turns of a turn-optimal drawing of~$\mathcal{L}'_t$ is 
      $D[t, \pi^t] = 0$. 
      
      \item[\textbf{Introduce node:}] Let $p$ be the vertex that has been
      introduced in node $t$, and let $c$ be the only child of $t$, then
      \begin{align*}
          D[t, \pi^t] = D[c, \pi^t|_{c}] + b(p, \pi^t).
      \end{align*}
      Inductively, $D[c, \pi^t|_c]$ corresponds to the number of turns of a
      turn-optimal drawing of~$\mathcal{L}'_c$ respecting the order
      $\pi^t$ restricted to vertices in $X_c$. 
      The node $t$ extends the graph $\mathcal{L}'_c$ by the vertex $p$,
      introducing edges between $p$ and vertices
      $N(p) \cap X_t$ that can cause turns including~$p$. These turns are
      counted by $b(p, \pi^t)$.
      Since $\pi^t$ dictates the relative position of every vertex in
      $N[p]$, a turn-optimal drawing of~$\mathcal{L}'_t$ respecting
      $\pi^t$ must contain every newly introduced turn.
      
      Note that we count a turn at most once in this setting: First, the
      vertex $p$ is introduced exactly once in $\mathcal{L}'_t$ by the
      definition of a nice tree decomposition.
      Further, by the definition of $b$, we count a turn only if one
      location of its consecutive events $v_{i-1}, v_i, v_{i+1}$ is $p$.
      Therefore, during the computation of $D[c, \pi^t]$, the turns
      involving $p$ have not been counted previously.

      \item[\textbf{Forget node:}] Let $X_t = X_c \setminus \{p\}$ be the
      bag of $t$, where $p$ is the vertex that has been forgotten in node
      $t$ and where $c$ is the only child of $t$, then 
      \begin{align*}
        D[t, \pi^t] = \min\{D[c, \pi^t_{p \rightarrow i}]\colon 
        i = 1, \dots, |X_c|\}.
      \end{align*}
      At node $t$, we remove vertex $p$ from the bag $X_c$, therefore
      $\mathcal{L}'_t = \mathcal{L}'_c$. 
      The ordering~$\pi^t$ dictates the drawing for $\mathcal{L}'[X_c]$
      in a turn-optimal drawing in $\mathcal{L}'_t$ except for $p$.
      Thus, the number of turns of a turn-optimal drawing of 
      $\mathcal{L}'_t$ respecting $\pi^t$ must be a turn-optimal drawing
      in $\mathcal{L}'_c$ respecting the order $\pi^t$, where $p$ is 
      inserted into the order~$\pi^t$ for some $\operatorname{rank}(p) = i$.
      
      Note that every turn involved in a turn-optimal drawing respecting 
      $\pi^t_{p\rightarrow i}$ for an optimal $i$ is accounted for 
      precisely once since $p$ be can forgotten only once. Since~$p$ was
      forgotten,~$X_t$ is a separating set that separates $p$ from every 
      other vertex in~$\mathcal{L}'\setminus \mathcal{L}'_t$, implying 
      that the neighbourhood of $p$ was already processed in
      $\mathcal{L}'_c$. Further, since the locations of every possible
      turn are a triangle in $\mathcal{L}'$, we know that there is an
      already processed bag that contains all three locations of 
      consecutive events $v_{i-1}$, $v_i$, $v_{i-1}$ that can cause a turn.

      \item[\textbf{Join node:}] Let $i$ and $j$ be the two children of
      node $t$, then we can calculate the number of turns in a drawing
      of $G_t$ respecting $\pi^t$ by
      \begin{align*}
          D[t, \pi^t] = D[i, \pi^t] + D[j, \pi^t] - b(X_t, \pi^t).
      \end{align*}
      At a join node, two independent connected components of $T$ are
      joined, where~$\mathcal{L}'_i$ and $\mathcal{L}'_j$ have only 
      vertices in~$X_t$ in common. 
      By induction, $D[i, \pi^t]$ and $D[j, \pi^t]$ contain
      the number of turns in a turn-optimal drawing in~$\mathcal{L}'_i$
      and a turn-optimal drawing in $\mathcal{L}'_j$, where both drawings
      respect $\pi^t$. Consequently, by summing the number of turns in
      both drawings~$\mathcal{L}'_i$ and~$\mathcal{L}'_j$, we count turns 
      occurring in $X_t$ twice. Therefore, we need to subtract turns
      whose three corresponding vertices are contained in $X_t$.
      Further, note that no new vertex is introduced in a join node, 
      thus no new turn can occur.
    \end{description}

    With the description of the recursive formulation of $D[t, \pi^t]$,
    we have shown that a turn $(p_{i-1}, p_i, p_{i+1})$ in a turn-optimal
    drawing is counted at least once in an introduce node of the last 
    introduced vertex $p$ of $(p_{i-1}, p_i, p_{i+1})$. 
    We have also argued in the description of the forget node that a
    turn at $p$ is counted at most once. Therefore, we count every turn
    exactly once in a drawing calculated by $D[r, \varnothing]$, 
    concluding the correctness of the algorithm.

    As for the runtime, note that $b(p, \pi^t)$ can be computed in
    $\mathcal{O}(|X_t|^2)$ time by annotating every clique $\{p, q, r\}$
    in $\mathcal{L}'$ corresponding to a potential turn by the number of
    distinct consecutive event triplets mapping to locations $p$, $q$,
    and $r$. Since $p$ is involved in each clique, we can enumerate
    every clique $\{p, q, r\}$ in $\mathcal{O}(|X_t|^2)$ time and due to
    the annotation, a clique can be processed in constant time.
    In order to count the total number of turns~$b(X_t, \pi^t)$ in a bag
    $X_t$, we need $\mathcal{O}(|X_t|^3)$ time.
    Assuming the algorithm operates on a nice tree decomposition
    $\mathcal{T}$ of width $\operatorname{tw}(\mathcal{L}')$,
    there are $\mathcal{O}(\operatorname{tw}(\mathcal{L}')\cdot n)$ many
    bags in $\mathcal{T}$. 
    For a node $t$ in the tree decomposition, we have to guess 
    $\mathcal{O}((\operatorname{tw}(\mathcal{L}')+1)!)$ many orders.
    If $t$ is a leaf node, it can be processed in constant time. Given an 
    order~$\pi^t$, we can compute any introduce, forget, or join node in
    $\mathcal{O}(\operatorname{tw}(\mathcal{L}')^3)$ time, yielding
    an overall runtime of 
    $\mathcal{O}((\operatorname{tw}(\mathcal{L}')+1)! \cdot 
    \operatorname{tw}(\mathcal{L}')^4 \cdot n)$. 
  \end{proof}
\end{document}